\newcommand{\lightercolor}[3]{% Reference Color, Percentage, New Color Name
    \colorlet{#3}{#1!#2!white}
}
 \definecolor{lightgray}{gray}{0.85}
 \definecolor{mygray}{gray}{0.8}
 \definecolor{mylightblue}{gray}{0.6}
\newlength{\bibitemsep}\setlength{\bibitemsep}{.2\baselineskip plus .05\baselineskip minus .05\baselineskip}
\newlength{\bibparskip}\setlength{\bibparskip}{0pt}
\let\oldthebibliography\thebibliography
\renewcommand\thebibliography[1]{%
  \oldthebibliography{#1}%
  \setlength{\parskip}{\bibitemsep}%
  \setlength{\itemsep}{\bibparskip}%
}
\newcommand{\cale}{\mathcal E}
\newcommand{\underv}{\underline{v}}
\newcommand{\under}{\underline}
\DeclareMathOperator*{\argmax}{argmax}
\newcommand{\support}{\text{support}}
\newcommand{\overU}{{\overline{U}}}
\newcommand{\underU}{{\underline{U}}}
\newcommand{\underu}{{\underline{u}}}
\newcommand{\bfr}{\mathbf{R}}
\theoremstyle{remark}
\newtheorem{remark}{Remark}
\newtheorem{case}{Case}
\newenvironment{proofof}[1]{\begin{proof}[Proof of #1]}{\end{proof}}
\theoremstyle{plain}
\newtheorem{theorem}{Theorem}[section]
\newtheorem{lemma}[theorem]{Lemma}
\newtheorem{proposition}{Proposition}[section]
\newtheorem{claim}[theorem]{Claim}
\theoremstyle{definition}
\newcommand{\rgrt}{\mathrm{RGRT}}
\newcommand{\wcr}{\mathrm{WCR}}
\title{Regret-Minimizing Project Choice}
\author{Yingni Guo \and Eran Shmaya\thanks{Guo: Department of Economics, Northwestern University; email: \href{mailto:yingni.guo@northwestern.edu}{yingni.guo@northwestern.edu}. Shmaya: Department of Economics, Stony Brook University; email: \href{mailto:eran.shmaya@stonybrook.edu}{eran.shmaya@stonybrook.edu}. We thank seminar audiences at the One World Mathematical Game Theory Seminar, the Toulouse School of Economics, the University of Bonn, University of Pennsylvania, Brown University, Northwestern University Theory Lunch Seminar, Kellogg Strategy Department Lunch Seminar, the University of Pittsburgh, Carnegie Mellon University, Stony Brook International Conference on Game Theory, Workshop on Strategic Communication and Learning, TE Session at ASSA, Arizona State University, Stanford University, Cornell University, Paris Dauphine University, MIT/Harvard, NYU, and Erice Workshop on Stochastic Methods in Game Theory for valuable feedback.}}
\date{\today}
\begin{document}

\maketitle

\begin{abstract} 

An agent observes the set of available projects and proposes some, but not necessarily all, of them. A principal chooses one or none from the proposed set. We solve for a mechanism that minimizes the principal's worst-case regret. We compare the single-project environment in which the agent can propose only one project with the multiproject environment in which he can propose many. In both environments, if the agent proposes one project, it is chosen for sure if the principal's payoff is sufficiently high; otherwise, the probability that it is chosen decreases in the agent's payoff. In the multiproject environment, the agent's payoff from proposing multiple projects equals his maximal payoff from proposing each project alone. The multiproject environment outperforms the single-project one by providing better fallback options than rejection and by delivering this payoff to the agent more efficiently.

\emph{JEL: D81, D82, D86}

\emph{Keywords: project choice, minimax regret, verifiable proposal, two-tier mechanism} % , proposing bias	
\end{abstract}

\section{Introduction}

Project choice is an important function of many organizations.\footnote{For instance, \cite{Ross1986} surveyed twelve large manufacturers in 1981-82. In the area of energy preservation alone, he collected data on $400$ discretionary projects. Roughly $100$ projects were completed in 1980-81, and $300$ projects were either underway or prospective, intended for the period 1982-85.} The process often involves two parties: (i) a party at a lower hierarchical level who has expertise and proposes projects, and (ii) a party at a higher hierarchical level who evaluates the proposed projects and makes the choice. This describes, for example, the relationship between a firm's division and the firm's headquarters when the division has a chance to choose an investment project, a factory location, or an office building.\footnote{\cite{StanleyBlock1984} surveyed $121$ multinational firms; $88$ percent of the respondents reported that projects were initiated from bottom up, and $70$ percent reported that decision-making was centralized.} It also applies to the relationship between a department and the university administration when the department has a hiring slot open.

This process of project choice is naturally a principal-agent problem. The agent privately observes which projects are available and proposes a subset of the available projects. The principal either chooses one from the projects proposed or rejects them all. If the two parties had identical preferences over projects, the agent would propose their jointly favorite project from the available ones, and the principal would simply rubber-stamp the agent's proposal. In reality, however, the principal does exert power over project choice, because the agent's preferences are not necessarily aligned with the principal's. For example, division managers may fail to internalize projects' externalities on other divisions; they may prefer projects that require large investments due to empire-building incentives (\cite{HarrisRaviv1996}, \cite{BerkovitchIsrael2004}); or the department and the university may put different weights on candidates' research and nonresearch attributes. Armed with the proposal-setting power, the agent has a tendency to propose his favorite project and hide his less preferred ones, even if those projects are ``superstars'' for the principal. 

This paper explores how the principal counteracts the agent's proposing bias. We address this question using a robust-design, non-Bayesian approach. Such an approach is useful when the principal faces a certain project-choice problem for the first time. It is also applicable to organizations in which the principal oversees many project-choice problems, but each problem offers little guidance on how to formulate a prior belief for another. For instance, the headquarters might oversee project choice in various divisions which operate in quite different markets. Similarly, the university oversees hiring in a range of departments; moreover, the academic labor market varies widely not only across departments but also across years. 

Due to the agent's private information, no mechanism can guarantee that the principal's favorite project among the available ones will be chosen. We define the principal's \emph{regret} as the difference between his payoff from his favorite project and his expected payoff from the project chosen under the mechanism. Thus, regret can be interpreted as ``money left on the table'' due to the agent's private information. We characterize a mechanism that minimizes the principal's worst-case regret, i.e., a mechanism that leaves as little money on the table as possible in all circumstances. As we will show, this minimax regret approach allows us (i) to explain why certain incentive schemes are common, (ii) to propose new incentive schemes, and (iii) to explore questions which are intractable under the Bayesian approach. 

\vspace{-5 mm}
\paragraph{Main results.} The principal chooses at most one project. Yet, one often observes that the agent proposes multiple projects.\footnote{For instance, it is usual practice for a search committee to propose multiple finalists for an appointment as a dean at a business school (\cite{Byrne2018,Byrne2021}).\label{ft:multi}} The strategic role that a multiproject proposal plays in project-choice problems has not yet been explored. Do several projects within the proposal have a chance of being chosen? How does the principal benefit from allowing the agent to propose multiple projects rather than to propose only a single project?  

To answer these questions, we distinguish between two environments. In the \emph{multiproject} environment, the agent can propose any subset of the available projects. In the \emph{single-project} environment, the agent can propose only one available project. Such a single-project restriction may emerge in organizations due to the principal's limited attention. It is also applicable to antitrust regulation, where a firm proposes one merger and the regulator decides whether to approve or reject the firm's proposal (\cite{Lyons2003}, \cite{Neven2005}, \cite{ArmstrongVickers2010}, \cite{Ottaviani2011}, \cite{NockeWhinston}). We take the environment as exogenous and derive the optimal mechanism in each environment.

In the single-project environment, a mechanism specifies for each single-project proposal the probability that it will be approved. We show that the optimal mechanism has a two-tier structure. If the proposed project gives the principal a sufficiently high payoff, it is a top-tier project for the principal and is approved for sure. Otherwise, it is a bottom-tier project and is approved only with some probability. The probability that a bottom-tier project is approved decreases in its payoff to the agent, in order to deter the agent from hiding projects that are more valuable for the principal. 

The optimal two-tier mechanism in the single-project environment aligns the agent's proposing incentives with the principal's preferences in two ways. First, if the agent has at least one top-tier project, then he will propose a top-tier project. Second, if all his projects are bottom-tier ones, then he will propose a project that maximizes the principal's expected payoff. For the principal to reject a bottom-tier project is suboptimal ex post, but is indispensable for aligning the agent's proposing incentives ex ante.

In the multiproject environment, a mechanism specifies for each proposed set of projects if and how a project will be chosen, namely, a randomization over the proposed projects and ``no project.'' We first show that the revelation principle holds in this environment, so it is without loss of generality to focus on mechanisms under which the agent optimally proposes all the available projects. 

If the agent in the multiproject environment actually proposes a single project, the optimal mechanism again has a two-tier structure. In particular, if this project's payoff to the principal is sufficiently high, he views it as a top-tier project and chooses it for sure. Otherwise, the project is a bottom-tier one and is chosen with some probability that decreases in its payoff to the agent. 

If the agent proposes multiple projects, his expected payoff must be at least his maximal expected payoff from proposing each project alone. This is because the agent had the option to propose just one of the available projects. The optimal mechanism randomizes over the proposed projects and ``no project'' to maximize the principal's expected payoff, subject to the constraint of promising this maximal expected payoff to the agent.\footnote{This maximization problem is a linear programming problem. There exists an optimal solution whose support has at most two projects.} Since the agent gets his maximal expected payoff from proposing each project alone in the proposal, we call this mechanism a \emph{proposal-wide maximal-payoff mechanism} (PMP mechanism). 

The optimal PMP mechanism identifies the strategic role that multiproject proposals play in project-choice problems. Among the available projects, there is one project that gives the agent his maximal expected payoff from proposing each project alone. This project is what the agent himself wants to propose and pins down his expected payoff. The multiproject environment allows the agent to also propose other projects, so it can deliver this expected payoff to the agent more efficiently than the single-project environment does.

\begin{figure}[htb!]
\begin{center}
\psset{xunit=4cm,yunit=4cm}

\begin{pspicture}(1.72,-.09)(3.3,1.17)

%\psline(1.72,-.09)(3.3,-.09)(3.3,1.14)(1.72,1.14)(1.72,-.09)

\psaxes{->,ticks=none,labels=none}(1.8,0)(1.8,0)(2.85,1.05)

\rput[l](2.86,0){agent payoff}
\rput[c](2.0,1.1){principal payoff}
\rput[c](1.8,-.06){$(0,0)$}

\rput[c](1.95,.85){$\bigstar$}
\rput[c](2.6,.5){{\large $\blacktriangle$}}

\psline[linecolor=black, linestyle=dotted, linewidth=1.2pt](1.8,0)(2.6,.5)
\psline[linecolor=black, linestyle=dotted, linewidth=1.2pt](1.95,.85)(2.6,.5)

\psline[linecolor=black, linestyle=dotted, linewidth=1.4pt](1.8,.25)(2.2,.25)
\psline[linecolor=black, linestyle=dotted, linewidth=1.4pt](2.2,0)(2.2,0.715385)
\psline[linecolor=black, linestyle=dotted, linewidth=1.4pt](1.8,0.715385)(2.2,0.715385)

\rput[c](2.2,-0.05){$x$}
\rput[c](1.75,.25){$y$}
\rput[c](1.75,0.715385){$z$}

\end{pspicture}
\end{center}
\vspace{-6 mm}
\caption{Strategic role of multiproject proposals}
\label{fg:intro}
\end{figure}
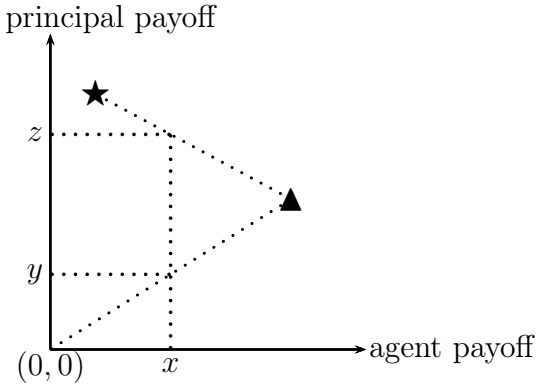

To illustrate this strategic role, consider the example in Figure \ref{fg:intro}. The agent has two projects, $\bigstar$ and $\blacktriangle$. According to the optimal mechanism in subsection \ref{se:multi}, if the agent proposed $\blacktriangle$ alone, it would be chosen with probability $1/2$. The agent's and the principal's expected payoffs would be $x$ and $y$, respectively. Since $x$ is higher than the agent's payoff from $\bigstar$, proposing $\blacktriangle$ gives the agent his maximal payoff from proposing each project alone. The multiproject environment allows the agent to also propose $\bigstar$, a better fallback option than rejection. The mechanism randomizes between $\blacktriangle$ and $\bigstar$ while promising the agent an expected payoff of $x$. This lifts the principal's expected payoff from $y$ to $z$.

Our result also explains when more than one proposed project have a chance of being chosen. If the agent's favorite project is already a top-tier project for the principal, the two parties' preferences are somewhat aligned. We show that this project will be chosen for sure. In contrast, if the agent's favorite project is a bottom-tier project for the principal, their preferences are less aligned. The mechanism may randomize between multiple proposed projects. Figure \ref{fg:intro} gives such an example: the mechanism ``finds some middle ground'' between the two parties---with some probability the choice favors the agent and with some probability it favors the principal.

We further analyze the environment in which the agent can propose up to $K$ projects for some fixed $K \geqslant 2$. We show that the aforementioned PMP mechanism admits an implementation in which (i) the agent proposes up to two projects, and (ii) the principal achieves the same worst-case regret as he does in the multiproject environment. The first project is what gives the agent his maximal expected payoff from proposing each project alone; this is what the agent himself wants to propose. The second is the principal's favorite project; this can be the fallback option when the first project is not chosen. Hence, in terms of minimizing the principal's worst-case regret, allowing for up to two projects is enough. 

Our results offer a number of insights and implications for the practical design of project-choice rules.

(1) The optimal mechanism in the single-project environment is comparable to how projects are chosen in the firms surveyed by \cite{Ross1986}. We discuss this connection in detail in subsection \ref{se:committorandomization}.  In the single-project environment, the threshold for being a top-tier project is higher than a zero payoff for the principal; such a project is approved for sure and the agent never proposes a bottom-tier project when he has a top-tier one. This high threshold for top-tier projects suggests one possible explanation for the finding in \cite{PoterbaSummers1995}. They show that US firms, when evaluating investment projects, use hurdle rates which are considerably higher than the cost of capital. In academic departments, top-tier projects are sometimes called slam-dunk cases or star candidates. In our experience, it is a no-brainer to propose such a candidate, who is typically approved with no questions asked. In contrast to the sure approval of top-tier projects, bottom-tier ones are rejected with positive probabilities. This result provides one possible explanation for why, in the case of faculty hiring, no slot is ever guaranteed and random rejection occurs. 

(2) We show that the principal's worst-case regret in the multiproject environment is significantly lower than that in the single-project one. Our result suggests one possible explanation for the practice of proposing more projects than are needed to the principal. In particular, senior positions like a deanship are vital to the success of an organization. It is reasonable to expect that for such positions the benefit from the multiproject environment outweighs the cost of reviewing multiple candidates, so the multiproject environment is more likely to be adopted for such positions than for lower-level ones. We show that in the multiproject environment, the principal does not always choose his own favorite one among those proposed. One testable implication is that, if an outsider collects all the interactions where multiple projects are proposed to the principal, the choice is not always the principal's favorite one, but sometimes reflects the proposing party's preferences.

\vspace{-5 mm}

\paragraph{Related literature.}

Our paper is closely related to \cite{ArmstrongVickers2010} and \cite{NockeWhinston}, who study the project-choice problem using the Bayesian approach. \cite{ArmstrongVickers2010} characterize the optimal deterministic mechanism in the single-project environment and show through examples that the principal does strictly better if either randomization or multiproject proposals are allowed. Their discussion highlights the informational role of multiproject proposals.\footnote{The informational role of multiproject proposals can be explained by the following example. If the principal is quite certain that there are three available projects, he can get the agent's information almost for free in the multiproject environment. The mechanism chooses the principal's favorite project if three projects are proposed and chooses no project otherwise.} We complement their discussion by identifying the strategic role of multiproject proposals. \cite{NockeWhinston} focus on mergers (i.e., projects) that are ex ante different and further incorporate the bargaining process among firms. They show that a tougher standard is imposed on mergers involving larger partners. We depart from these papers by taking the minimax regret approach to this multidimensional screening problem. This more tractable approach allows us to explore questions which are intractable under the Bayesian approach, including how much the principal benefits from the multiproject environment, from randomization, and from a smaller project domain. 

If the principal is restricted to the single-project environment and deterministic mechanisms, a mechanism is effectively a set of permitted projects that the agent can choose with no questions asked. This connects project-choice problems to the delegation literature initiated by \cite{Holmstrom1984}.\footnote{Some later contributions include, for example, \cite{MelumadShibano1991}, \cite{MartimortSemenov2006}, \cite{AlonsoMatouschek2008}, and \cite{AmadorBagwell2013}.} If either the multiproject environment or randomization is allowed, more mechanisms can be implemented than by simply specifying a set of permitted projects (as shown, respectively, in subsection 3.3 and footnote 8 of \cite{ArmstrongVickers2010}). Our contribution is to characterize the optimal randomized mechanisms in both the single-project and the multiproject environments.

\cite{BerkovitchIsrael2004} study a project-choice model with binary project types (high versus low capital-intensive). The agent has an incentive to hide the principal's preferred project (which is low capital-intensive). They solve for the optimal randomized mechanism in which the principal can compensate the agent for certain proposals. They characterize conditions under which monetary incentives are not used. The principal rejects the high capital-intensive project with a positive probability, so the agent always reveals the low capital-intensive project whenever it is available. Our two-tier mechanism in the single-project environment has a similar feature: bottom-tier projects are rejected with positive probabilities so the agent proposes a top-tier project if he has at least one such project. 
 
\cite{CheDesseinKartik2013} and \cite{GoelHannCaruthers} consider a project-choice problem in which the number of available projects is public information. In \cite{CheDesseinKartik2013}, the agent privately learns the projects' payoffs and sends a cheap-talk message. They show that the principal benefits from randomizing between a default project and the agent's recommendation. In \cite{GoelHannCaruthers}, the agent's only constraint is not to overreport projects' payoffs to the principal. Because the agent in these papers cannot hide projects as our agent can, he loses the proposal-setting power. The resulting incentive schemes are thus quite different.
 
The minimax regret approach to uncertainty dates back to \cite{Wald1950} and \cite{Savage1951}. It has since been used broadly in game theory, mechanism design, and machine learning. A decision-theoretical axiomatization for the minimax regret criterion can be found in \cite{milnor1954games} and \cite{Stoye2011}. Our paper contributes especially to the literature on mechanism design with the minimax regret approach. To start with, \cite{HurwiczShapiro1978} examine a moral hazard problem. \cite{BergemannSchlag2008, BergemannSchlag2011} examine monopoly pricing. \cite{RenouSchlag2011} apply the solution concept of $\varepsilon$-minimax-regret to the problem of implementing social choice correspondences. \cite{Bevia2019} examine the contest which minimizes the designer's worst-case regret. \cite{malladi2020judged} studies the optimal approval rules for innovation, and \cite{GuoShmaya2023} study the optimal mechanism for monopoly regulation. More broadly, we contribute to the growing literature of mechanism design with worst-case objectives. For a survey on robustness in mechanism design, see \cite{Carroll2019}.

\section{Model and mechanism}
%Let $u:D\to \bfr_+$ be the agent's payoff function, so his payoff is $u(a)$ if project $a$ is chosen. If no project is chosen, the agent's payoff is zero. Let $v:D\to \bfr_+$ be the principal's payoff function, so his payoff is $v(a)$ if project $a$ is chosen. If no project is chosen, the principal's payoff is zero.

Let $D$ be the domain of all possible projects. The agent's private \emph{type} $A \subseteq D$ is a finite set of available projects. The agent proposes a set $P$ of projects, and the principal can choose one project from this set. The set $P$ is called the agent's \emph{proposal}. It must satisfy two conditions. First, the agent's proposal is \textit{verifiable} in the sense that he can propose only available projects. Hence, the agent's proposal must be a subset of his type, $P\subseteq A$. %(\cite{GreenLaffont1986} model verifiable information by assuming that the agent's message space varies with his type. In our model, the set of the proposals the agent can make varies with his type.) 
Second, $P\in \cale $ for some fixed set $\cale$ of subsets of $D$. The set $\cale$ captures all the exogenous restrictions on the proposal. For instance, in the setting of antitrust regulation, the agent is restricted to proposing at most one project. In some organizations, the principal has limited attention, so the agent can propose at most a certain number of projects.  

Let $u:D\to \bfr_+$ be the agent's payoff function and $v:D\to \bfr_+$ the principal's payoff function. If project $a$ is chosen, the agent's payoff is $u(a)$ and the principal's payoff is $v(a)$. If no project is chosen, both parties get zero. 

We begin with two environments which are natural first steps: \emph{single-project} and \emph{multiproject}. In the single-project environment, the agent can propose at most one available project, so $\cale=\{P\subseteq D: |P|\leqslant 1 \}$. In the multiproject environment, the agent can propose any set of available projects so $\cale=2^D$, the power set of $D$. In subsection \ref{se:intermediate}, we discuss the intermediate environments in which the agent can propose up to $K$ projects for some fixed number $K\geqslant 2$.

A \emph{subprobability measure over $D$ with a finite support} is given by $\pi:D \to [0,1]$ such that \[\support(\pi)=\{a\in D:\pi(a)> 0\}\] is finite, and $\sum_{a}\pi(a)\leqslant 1$. When we say that a project \emph{is chosen from} a subprobability measure $\pi$ with finite support, we mean that project $a$ is chosen with probability $\pi(a)$, and that with probability $1-\sum_{a}\pi(a)$ no project is chosen. 

The principal's ability to reject all proposed projects (i.e., to choose no project) is crucial for him to retain some bargaining power. If the principal were required to choose a project as long as the agent has proposed some, then the agent would propose only his own favorite project, which would be chosen for sure.

A \emph{mechanism} $\rho$ attaches to each proposal $P\in \cale$ a subprobability measure $\rho(\cdot|P)$ such that $\support(\rho(\cdot|P))\subseteq P$. The interpretation is that, if the agent proposes $P$, then a project is chosen from the subprobability measure $\rho(\cdot|P)$. Thus, the agent's expected payoff under the mechanism $\rho$ if he proposes $P$ is $U(\rho,P)=\sum_{a\in P} u(a)\rho(a|P)$.

A \emph{choice function} $f$ attaches to each type $A$ of the agent a subprobability measure $f(\cdot|A)$ such that $\support(f(\cdot|A))\subseteq A$. The interpretation is that, if the set of available projects is $A$, then a project is chosen from the subprobability measure $f(\cdot|A)$. 

A choice function $f$ is \emph{implemented} by a mechanism $\rho$ if, for every type $A$ of the agent, there exists a probability measure $\mu$ with support over $\argmax_{P\subseteq A, P\in \cale} U(\rho,P)$ such that $f(a|A)=\sum_{P}\mu(P)\rho(a|P)$ for every $a\in A$. The agent chooses only proposals that give him the highest expected payoff and can randomize among them. 

%The interpretation is that the agent can randomize among the proposals that give him the highest expected payoff. 

% The interpretation is that the agent randomizes the proposal among all the proposals that he can make that give him the highest expected payoff. 

\subsection{Revelation principle in the multiproject environment}

Consider the multiproject environment $\cale=2^D$. A mechanism $\rho$ is \emph{incentive-compatible (IC)} if the agent finds it optimal to propose all the available projects. That is, $U(\rho,A)\geqslant U(\rho,P)$ for every finite set $A\subseteq D$ and every subset $P \subseteq A$. Equivalently, a mechanism $\rho$ is IC if and only if the agent's payoff, $U(\rho,P)$, weakly increases in $P$ with respect to set inclusion. The following proposition states the revelation principle in the multiproject environment. 
\begin{proposition}\label{pr:revelation}
  Assume $\cale=2^D$. If a choice function $f$ is implemented by some mechanism, then the mechanism $f$ is IC and implements the choice function $f$.\end{proposition}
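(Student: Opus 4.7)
The plan is to show that the agent's payoff under $f$ (viewed as a mechanism) from proposing $P$ equals the maximum of $U(\rho,\cdot)$ over subsets of $P$; IC and implementation then fall out immediately. Let $\rho$ be the mechanism that implements $f$, and for each type $A$ let $\mu_A$ be the probability measure on $\argmax_{P\subseteq A} U(\rho,P)$ such that $f(a|A)=\sum_{P}\mu_A(P)\rho(a|P)$ for every $a\in A$.

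The key computation is to unwind $U(f,P)$ when we regard $f$ as a mechanism and the agent proposes $P$. By definition,
\[
U(f,P)=\sum_{a\in P} u(a)f(a|P)=\sum_{a\in P} u(a)\sum_{P'\subseteq P}\mu_P(P')\rho(a|P').
\]
Exchange the order of summation; since $\support(\rho(\cdot|P'))\subseteq P'\subseteq P$, the inner sum $\sum_{a\in P}u(a)\rho(a|P')$ is exactly $U(\rho,P')$. Therefore
\[
U(f,P)=\sum_{P'\subseteq P}\mu_P(P')\,U(\rho,P')=\max_{P'\subseteq P} U(\rho,P'),
\]
where the last equality uses that $\mu_P$ is supported on $\argmax_{P'\subseteq P}U(\rho,P')$.

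With this identity in hand, IC is immediate: if $P\subseteq A$, the set of subsets of $P$ is contained in the set of subsets of $A$, hence $U(f,A)=\max_{P'\subseteq A}U(\rho,P')\geqslant \max_{P'\subseteq P}U(\rho,P')=U(f,P)$. To see that the mechanism $f$ implements the choice function $f$, observe that by IC the full set $A$ itself belongs to $\argmax_{P\subseteq A}U(f,P)$, so one may take the degenerate probability measure on $\{A\}$, which yields $f(a|A)=f(a|A)$ trivially. No obstacle really arises beyond tracking definitions carefully, and the one step that deserves attention is the support argument that turns the inner sum into $U(\rho,P')$ exactly—this is where verifiability of proposals ($\support(\rho(\cdot|P'))\subseteq P'$) is used.
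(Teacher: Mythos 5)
Your proof is correct and follows essentially the same route as the paper's: both hinge on the identity $U(f,P)=\max_{Q\subseteq P}U(\rho,Q)$, from which IC follows by monotonicity of the max over nested collections of subsets, and implementation follows by taking the degenerate measure on $A$. The only difference is that you explicitly derive that identity by exchanging the order of summation and using $\support(\rho(\cdot|P'))\subseteq P'$, a step the paper leaves implicit with the phrase ``the two equalities follow from the fact that $\rho$ implements $f$.''
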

  
 When the agent proposes a set $P$ of projects, he provides evidence that his type $A$ satisfies $P\subseteq A$. In the multiproject environment, the agent can provide the maximal evidence for his type. This property is called \emph{normality} by \cite{BullWatson2007}, which is equivalent to the \emph{full reports condition} of \cite{LipmanSeppi1995}. Another interpretation of the multiproject environment is to view an agent who proposes a set $P$ as an agent who declares that his type is $P$. The relation that ``type $A$ can declare to be type $B$'' between types is reflexive and transitive, by the corresponding properties of the inclusion relation between sets. Transitivity is called the \emph{nested-range condition} in \cite{GreenLaffont1986}. Previous papers (e.g., \cite{GreenLaffont1986}, \cite{BullWatson2007}) have shown that the revelation principle holds under these conditions. We repeat the argument for completeness. %Although our Proposition~\ref{pr:revelation} does not follow directly from their theorems (because the agent's proposal $P$ not only provides evidence but also determines the set of projects from which the principal can choose), our proof uses a very similar argument as in those papers.  

\begin{proofof}{Proposition~\ref{pr:revelation}}
Assume that the mechanism $\rho$ implements the choice function $f$. Then for every finite set $A \subseteq D$ and every subset $P\subseteq A$, we have:
\[U(f,A)=\max_{Q \subseteq A} U(\rho,Q)\geqslant  \max_{Q \subseteq P} U(\rho,Q)=U(f,P),\]
where the inequality follows from the fact that $Q\subseteq P$ implies  $Q\subseteq A$, and the two equalities follow from the fact that $\rho$ implements $f$. Hence, the mechanism $f$ is IC. Also, by definition, if the mechanism $f$ is IC, then it implements the choice function $f$.
\end{proofof}

Since an implementable choice function is itself an IC mechanism and vice versa, we will use both terms interchangeably when we discuss the multiproject environment.

\section{The principal's problem}

The principal's \emph{regret} from a choice function $f$ when the set of available projects is $A$ is:
\[
\rgrt(f,A)=\max_{a \in A}v(a)-\sum_{a\in A}v(a)f(a|A).
\]
The regret is the difference between what the principal could have achieved if he had known the set $A$ of available projects and what he actually achieves. The \emph{worst-case regret} ($\wcr$) from a choice function $f$ is:
\[
\wcr(f)=\sup_{A\subseteq D, |A|<\infty  }\rgrt(f,A),
\]
where the supremum ranges over all possible types of the agent (i.e., all possible finite sets of available projects). The principal's problem is to minimize $\wcr(f)$ over all implementable choice functions $f$. This step is our only departure from the Bayesian approach. The Bayesian approach instead assigns a prior belief over the number and the payoff characteristics of the available projects. The principal's problem under the Bayesian approach is to minimize the \emph{expected} regret instead of the \emph{worst-case} regret.

%\cite{Savage1951} calls this difference \emph{loss}. We call it \emph{regret} instead, and thereby follow the more recent game theory and computer science literature. 

\cite{Wald1950} and \cite{Savage1972} propose considering only \emph{admissible} choice functions (i.e., choice functions that are not weakly dominated). An implementable choice function $f$ is \emph{admissible} if there exists no other implementable $f'$ such that the principal's regret is weakly higher under $f$ than under $f'$ for every type of the agent and strictly higher for some type. For the rest of the paper, we focus mainly on admissible choice functions. 

While our principal takes the minimax regret approach towards the agent's type, he calculates the expected payoff with respect to his own objective randomization. The same assumption is made by \cite{Savage1972} when he discusses the use of randomized acts under the minimax regret approach \cite[chapter~9.3]{Savage1972}. A similar assumption is made in the ambiguity-aversion literature. For example, in~\cite{Gilboa1989}, the decision maker calculates his expected payoff with respect to random outcomes (i.e., ``roulette lotteries'') but evaluates acts using the maxmin approach with respect to non-unique priors. %If we made the alternative assumption that the principal takes the worst-case regret approach even towards his own randomization, we effectively restrict the principal to deterministic mechanisms. 
The alternative assumption that the principal takes the minimax regret approach even towards his own randomization effectively restricts the principal to deterministic mechanisms.

From now on, we assume that the set $D$ of all possible projects is $[\under u,1]\times [\under v,1]$ for some parameters $\under u, \under v \in [0,1]$, and that the functions $u(\cdot)$ and $v(\cdot)$ are projections over the first and second coordinates. Abusing notation, we denote a project $a\in D$ also by $a=(u,v)$, where $u$ and $v$ are the agent's and the principal's payoffs, respectively, if project $a$ is chosen. 

The parameters $\underu$ and $\underv$ quantify the uncertainty faced by the principal: the higher they are, the smaller the uncertainty. They also measure the players' preference intensity over projects. As $\underu$ increases, the agent's preferences over projects become less strong, so it becomes easier to align his incentives with those of the principal. As $\underv$ increases, the principal's preferences over projects become less strong, so the agent's tendency to propose his own favorite project becomes less costly for the principal.

\section{Optimal mechanisms}
\subsection{Preliminary intuition} \label{se:pre}

We now use an example to illustrate the fundamental trade-off faced by the principal: choosing the agent's preferred project with a very high probability incentivizes the agent to propose only his preferred project, but rejecting such a project may risk rejecting the only project that is available. We also use this example to illustrate the intuition behind the optimal mechanisms. We first explain how randomization helps to reduce the principal's $\wcr$ in the single-project environment. We then explain how the multiproject environment can further reduce the $\wcr$. For this illustration, we assume that $\under v = 0$, so $D=[\under u,1]\times [0,1]$.

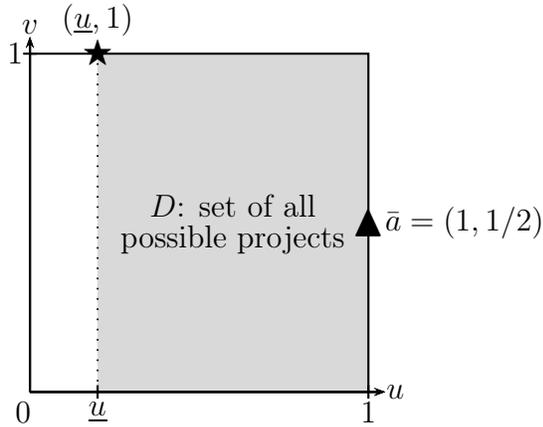
\begin{figure}[htb!]
\begin{center}
\psset{xunit=4.5cm,yunit=4.5cm}

\begin{pspicture}(-.1,-.1)(1.1,1.2)

\pscustom[linewidth=0pt,fillstyle=solid,fillcolor=lightgray,linecolor=lightgray]{ 
\psline(.2,0)(1,0)(1,1)(.2,1)}

\rput[c](.6,.55){$D$: set of all}
\rput[c](.6,.45){possible projects}

\psaxes{->,ticks=none,labels=none}(0,0)(0,0)(1.05,1.05)

\rput[c](1.08,0){$u$}
\rput[c](1,-.06){$1$}

\rput[c](0,1.08){$v$}
\rput[r](-.02,1){$1$}
\rput[c](-.02,-.06){$0$}

\psline[linecolor=black](1,-.02)(1,.02)
\psline[linecolor=black](-.02,1)(.02,1)

\psline[linecolor=black](.2,-.02)(.2,.02)
\psline[linestyle=dotted](.2,0)(.2,1)
\rput[c](.2,-.06){$\under u $}

\psline[linecolor=black,linestyle=solid](0,1)(1,1)(1,0)

\rput[c](1,.5){{\Large $\blacktriangle$}}
\rput[l](1.05,.5){$\bar{a}=(1,1/2)$}

\rput[c](.2,1){$\bigstar$}
\rput[c](.2,1.1){$(\under u, 1)$}

\end{pspicture}
\end{center}
\vspace{-6 mm}
\caption{Preliminary intuition, $\under v=0$}
\label{fg:intuition}
\end{figure}

Consider the single-project environment and assume first that the principal is restricted to deterministic mechanisms. In this case, a mechanism is a set of projects that the principal approves for sure, and all other projects are rejected outright. For each such mechanism, the principal has two fears. First, if the agent has two or more projects which will be approved, then he will propose the one he likes the most, even if projects are available that are more valuable to the principal. Second, if the agent has only projects which will be rejected, then the principal loses the payoff from these projects. As we explain in the next paragraph, these two fears imply that the principal's $\wcr$ under any deterministic mechanism is at least $1/2$.
%, no matter how he designs the deterministic mechanism. 

%In the next paragraph we argue that no matter how the principal designs the deterministic mechanism, these two fears imply that his $\wcr$ is at least $1/2$.

%Applied to the project $\bar{a}=(1,1/2)$, 

%no matter how the principal designs the deterministic mechanism, his $\wcr$ is at least $1/2$. 

Consider project $\bar{a}=(1,1/2)$ in Figure \ref{fg:intuition}. It gives the agent his highest payoff $1$, while giving the principal only a moderate payoff of $1/2$. If the mechanism approves $\bar{a}$ and the set of available projects is $\{\bar{a}, (\under u, 1)\}$, then the agent will propose $\bar{a}$ rather than $(\under u, 1)$, so the principal suffers regret $1/2$. If the mechanism rejects $\bar{a}$ but $\bar{a}$ turns out to be the only available project, then the principal also suffers regret $1/2$. Thus, the $\wcr$ under any deterministic mechanism is at least $1/2$. On the other hand, the deterministic mechanism that approves project $(u,v)$ if and only if $v\geqslant 1/2$ achieves the $\wcr$ of $1/2$, so it is optimal among all the deterministic mechanisms.

We now explain how randomization can reduce the $\wcr$ in the single-project environment. We first note that, if $\underu=0$, then, even with randomized mechanisms, the principal cannot reduce his $\wcr$ below $1/2$. This is because, when the set of available projects is $\{\bar{a},(0,1)\}$, the only way to incentivize the agent to propose the project $(0,1)$ is still to reject the project $\bar{a}$ outright if $\bar{a}$ is proposed. However, if $\underu>0$, then the principal can do better. He can approve the project $\bar{a}$ with probability $\underu$, while still maintaining the agent's incentive to propose the principal's preferred project $(\underu, 1)$ when the set of available projects is $\{\bar{a}, (\under u, 1)\}$. We carry out this idea in Theorem~\ref{th:one-project} in subsection \ref{se:single}.

Let us now consider the multiproject environment. We again begin with deterministic mechanisms. Under deterministic mechanisms, more choice functions can be implemented in the multiproject environment than in the single-project one.\footnote{For example, the principal can implement the choice function that chooses (i) the agent's favorite project, if there are at least two available projects; and (ii) nothing, if there is at most one available project.} However, when he is restricted to deterministic mechanisms, the principal has the same minimal $\wcr$ in the multiproject environment as in the single-project one. This is because, if the principal wants to choose $(\underu, 1)$ when the set of available projects is $\{\bar{a},(\underu,1)\}$, then the only way to incentivize the agent to include $(\underu,1)$ in his proposal is to reject the project $\bar{a}$ when $\bar{a}$ is proposed alone.

We now explain how randomization can help in the multiproject environment, even when $\underu=0$. While a deterministic mechanism must pick either $\bar{a}$ or $(0,1)$ or nothing when the agent proposes $\{\bar{a}, (0,1)\}$, a randomized mechanism can find some middle ground by choosing each project with probability $1/2$. On the other hand, if the agent proposes only $\bar{a}$, the principal chooses $\bar{a}$ with probability $1/2$, so the agent of type $\{\bar{a}, (0,1)\}$ is willing to propose $\{\bar{a}, (0,1)\}$ instead of just $\bar{a}$. The principal's regret is $1/4$ both when the agent's type is $\{\bar{a}, (0,1)\}$ and when his type is $\{\bar{a}\}$. We carry out this idea in Theorem~\ref{th:multiproject} in subsection \ref{se:multi}. Specifically, when the agent proposes $P$, the principal gives the agent the maximal payoff he can offer, subject to the constraint that he can give the agent this same payoff if the agent proposes $P\cup\{(\underu, 1)\}$ and can still keep his regret under control.

\subsection{Optimal mechanism in the single-project environment} \label{se:single}

Since the agent in this environment can propose at most one project, a mechanism specifies the approval probability for each proposed project. Instead of using our previous notation $\rho(a|\{a\})$, we let $\alpha(u,v) \in [0,1]$ denote the approval probability if the agent proposes the project $(u,v)$. 

 \begin{theorem}[Single-project environment]\label{th:one-project}
Assume $\cale=\{P\subseteq D: |P|\leqslant 1 \}$. Let
   \begin{equation}\label{r-one} R^s =\max_{v\in [\under v,1]}\min\left\{1-v, (1-\underu)v\right\}=\min\left\{\frac{1-\underu}{2-\underu}, 1 - \under v\right\}.
\end{equation}
\begin{enumerate}[(i)]
\vspace{-0.35cm} \item The $\wcr$ under any mechanism is at least $R^s$. 
\vspace{-0.35cm} \item Let $\alpha^s$ be the mechanism given by:
      \[\alpha^s(u,v)=\begin{cases}1, &\text{ if }v\geqslant 1-R^s \text{ or } u=0;\\ { \under u}/{u}, &\text{ if }v<1-R^s \text{ and }u>0. \end{cases}\]
Under $\alpha^s$, if the agent has projects with $v\geqslant 1-R^s$, it is optimal for him to propose his favorite project among those with $v\geqslant 1-R^s$. Otherwise, it is optimal to propose a project that maximizes the principal's expected payoff $\alpha^s(u,v)v$. The corresponding choice function has the $\wcr$ of $R^s$ and is admissible. 
\vspace{-0.35cm} \item \label{singlethree} If a mechanism $\alpha$ implements a choice function that has the $\wcr$ of $R^s$, then $\alpha(u,v)\leqslant \alpha^s(u,v)$ for every $(u,v)\in D$. 
\end{enumerate}
\end{theorem}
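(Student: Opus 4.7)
My plan is to prove parts~(i)--(iii) in turn, leveraging the adversarial pair $\{(1,v^*),(\underu,1)\}$ featured in subsection~\ref{se:pre}. For part~(i), I would fix any mechanism $\alpha$, any implementable choice function $f$, and any $v^*\in[\underv,1]$. On the singleton $\{(1,v^*)\}$ the agent proposes $(1,v^*)$ whenever $\alpha(1,v^*)>0$, so regret equals $v^*(1-\alpha(1,v^*))$; thus if $\alpha(1,v^*)\leqslant \underu$ then $\wcr(f)\geqslant v^*(1-\underu)$. Otherwise $\alpha(1,v^*)>\underu\geqslant \alpha(\underu,1)\underu$, so on $\{(1,v^*),(\underu,1)\}$ the agent strictly prefers $(1,v^*)$, yielding regret $1-\alpha(1,v^*)v^*\geqslant 1-v^*$. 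Either way $\wcr(f)\geqslant \min\{1-v^*,(1-\underu)v^*\}$, and maximizing over $v^*$ gives $R^s$ as in~\eqref{r-one}.

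For part~(ii), I would first verify the agent's best response under $\alpha^s$: a top-tier project $(u,v)$ yields payoff $u$ while a bottom-tier one (with $u>0$) yields $\underu$, so when $A$ contains a top-tier project the agent strictly prefers the top-tier one with the largest $u$, and otherwise all proposals tie and the tie is broken as claimed. I would then bound regret in two cases. (a) $A$ has a top-tier project, so the chosen $(u_P,v_P)$ has $v_P\geqslant 1-R^s$, and regret $=\max_A v-v_P\leqslant R^s$. (b) All projects of $A$ are bottom-tier, so $\max_A v<1-R^s$; since $\max_A v\geqslant \underv$, this forces $R^s=(1-\underu)/(2-\underu)$. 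In case (b), the agent could have proposed the principal's favorite $(u^*,v^*)\in A$, giving principal payoff $\alpha^s(u^*,v^*)v^*\geqslant \underu v^*$ (using $u^*\leqslant 1$), so his chosen proposal yields at least $\underu\max_A v$ and regret is at most $(1-\underu)\max_A v<(1-\underu)(1-R^s)=R^s$. Together with part~(i), $\wcr(\alpha^s)=R^s$.

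For part~(iii), take any mechanism $\alpha$ implementing some $f$ with $\wcr(f)=R^s$, and fix $(u,v)\in D$ with $u>0$ and $v<1-R^s$. If $\alpha(u,v)u>\alpha(\underu,1)\underu$, then on $\{(u,v),(\underu,1)\}$ the agent uniquely proposes $(u,v)$ and regret is $1-\alpha(u,v)v\geqslant 1-v>R^s$, a contradiction; hence $\alpha(u,v)u\leqslant \alpha(\underu,1)\underu\leqslant \underu$, so $\alpha(u,v)\leqslant \underu/u=\alpha^s(u,v)$ (elsewhere $\alpha^s\equiv 1$). For admissibility, if $f'$ weakly dominates $f^s$ then $\wcr(f')=R^s$, and part~(iii) gives $\alpha'\leqslant \alpha^s$; evaluating on each singleton yields $v(1-\alpha'(u,v))\geqslant v(1-\alpha^s(u,v))$, and weak dominance forces $\alpha'=\alpha^s$ on $D$. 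Since $f^s$ was constructed to break agent-indifference ties in favor of the principal, $f^s$ minimizes regret among all choice functions implemented by $\alpha^s$, so any alternative tie-breaking in $f'$ would strictly raise regret somewhere, contradicting dominance.

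The main obstacle is the admissibility step: part~(iii) pins down the mechanism pointwise, but one must argue separately that the regret-minimizing tie-breaking is forced by weak dominance, in order to conclude $f'=f^s$ and rule out strict dominance.
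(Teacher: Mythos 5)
Your proposal is correct and follows essentially the same route as the paper: the same adversarial pair $\{(1,v),(\underu,1)\}$ for the lower bound, the same top-tier/bottom-tier case split for showing $\wcr(\alpha^s)=R^s$ (you use the identity $(1-\underu)(1-R^s)=R^s$ where the paper invokes the variational definition of $R^s$ directly, but these are equivalent), and the same incentive argument on $\{(u,v),(\underu,1)\}$ for part (iii). Your admissibility step is organized slightly differently---you force $\alpha'=\alpha^s$ via singleton regrets and then compare tie-breakings, whereas the paper only pins down $\alpha'$ on the top tier and uses $\alpha'\leqslant\alpha^s$ together with the principal-optimal tie-breaking on bottom-tier-only types---but both arguments rest on the same facts and your version goes through (the $v=0$ singletons, where equality of regrets does not pin down $\alpha'$, are harmless since such projects contribute nothing to the principal).
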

We first explain why the $\wcr$ under any mechanism $\alpha$ is at least $R^s$. Pick any $v\in [\underv,1]$ and consider the project $(1,v)$. If $\alpha(1,v)> \underu$, then the agent will not propose the principal's preferred project $(\underu,1)$ when both $(1,v)$ and $(\underu,1)$ are available. The principal's regret is then $1-\alpha(1,v) v \geqslant 1-v$. If $\alpha(1,v)\leqslant \underu$, then when the agent has only $(1,v)$ available, the regret is $v-\alpha(1,v)v \geqslant  (1-\underu)v$. Therefore, the principal's $\wcr$ is at least $\min\left\{1-v, (1-\underu)v\right\}$. The operator $\max_{v \in [\underv,1]}$ in $R^s$ reflects the fact that the argument holds for any $v\in [\underv,1]$.

The optimal mechanism $\alpha^s$ has a two-tier structure: a top tier if the principal's payoff $v$ is above $1-R^s$, and a bottom tier if this payoff is below $1-R^s$. A top-tier project is approved for sure. A bottom-tier project is approved with probability $\underu/u$, so the agent expects a constant payoff of $\underu$ from proposing a bottom-tier project. A top-tier project gives the principal a high enough payoff, so his regret is at most $1-v \leqslant R^s$ if he approves such a project. For a bottom-tier project $(u,v)$, the approval probability cannot exceed $\underu/u$. Otherwise, the agent will hide $(\underu,1)$ when he has both $(u,v)$ and $(\underu,1)$ available, causing high regret for the principal.

The agent will propose a top-tier project if he has at least one such project. If all his projects are in the bottom tier, he finds it optimal to propose a project that maximizes the principal's expected payoff. The principal still suffers regret from two sources. First, if the agent has two or more top-tier projects which will be approved for sure, he will propose what he favors instead of what the principal favors. Second, if the agent has only bottom-tier projects, his proposal is rejected with positive probability. The threshold for the top tier, $1-R^s$, is chosen to keep the regret from both sources under control. 

The approval probability $\alpha^s(u,v)$ increases in $v$ (the principal's payoff) and decreases in $u$ (the agent's payoff). This monotonicity in $v$ and $u$ is natural. In particular, the principal is less likely to approve projects that give the agent high payoffs in order to deter the agent from hiding projects that give the principal high payoffs. It is interesting to compare our optimal mechanism $\alpha^s$ in the single-project environment to that in \cite{ArmstrongVickers2010}. They characterize the optimal deterministic mechanism in a Bayesian setting. Under the assumptions that (i) projects are i.i.d.\ and (ii) the number of available projects is independent of their payoff characteristics, they show that the optimal deterministic mechanism $\alpha(u,v)$ increases in $v$: a project $(u,v)$ is approved for sure if $v \geqslant r (u)$ and rejected outright if $v < r (u)$. The function $r(u)$ depends on the distributions of the number and the payoffs of the available projects. Our result differs in three aspects. First, under our prior-free approach, the threshold for the top tier (i.e., where projects are approved for sure) depends only on the principal's payoff $v$. Second, we show how to reduce inefficient rejections in the bottom tier without jeopardizing the agent's incentive to propose top-tier projects. Third, we show that the approval probability $\alpha^s(u,v)$ is monotone decreasing in $u$. 

%The typical

% One possible situation under the minimax regret approach to uncertainty is that multiple mechanisms can achieve the minimal $\wcr$. 

It is possible that multiple mechanisms achieve the $\wcr$ of $R^s$. Statement (\ref{singlethree}) in Theorem \ref{th:one-project} says that the mechanism $\alpha^s$ is uniformly more generous in approving the agent's proposal than any other mechanism that has the $\wcr$ of $R^s$. This statement has two implications. First, among all mechanisms that have the $\wcr$ of $R^s$, the mechanism $\alpha^s$ is the agent's most preferred one for every possible type $A$. Second, compared to any mechanism that has the $\wcr$ of $R^s$, the mechanism $\alpha^s$ gives the principal a higher payoff (or equivalently, a lower regret) for every singleton $A$. % and a strictly higher payoff for some singleton $A$. 

\subsection{Optimal mechanism in the multiproject environment}\label{se:multi}

We now present the optimal mechanism in the multiproject environment. Let $\alpha:[\underu, 1]\times [\underv, 1]\to [0,1]$ be a function and consider the following \emph{proposal-wide maximal-payoff mechanism} (PMP mechanism) induced by the function $\alpha$: 
\begin{enumerate}
\vspace{-0.35cm} \item[(1)] If the proposal $P$ has only one project $(u,v)$, it is approved with probability $\alpha(u,v)$.
\vspace{-0.35cm} \item[(2)] If the proposal $P$ has multiple projects, the mechanism randomizes over the proposed projects and no project to maximize the principal's expected payoff, while promising the agent an expected payoff of $\max_{(u,v)\in P}\alpha(u,v)u$. This is the maximal expected payoff the agent could get from proposing each project alone.
\end{enumerate}
By the definition of a PMP mechanism, the more projects the agent proposes, the weakly higher his expected payoff will be. Therefore, PMP mechanisms are incentive-compatible. Under any incentive-compatible mechanism, the agent's payoff from proposing multiple projects must be at least his maximal payoff from proposing each project alone. A PMP mechanism has the feature that the agent receives exactly his maximal payoff from proposing each project alone, but not more.

Our next theorem shows that there exists a PMP mechanism that achieves the minimal $\wcr$. Moreover, it is the agent's most preferred mechanism among all the mechanisms that achieve the minimal $\wcr$ and are admissible.

 \begin{theorem}[Multiproject environment]\label{th:multiproject}
Assume $\cale=2^D$. For every $u\in [\under u,1]$ and $p\in [0, 1]$, let $\gamma(u,p)$ be 
\begin{equation}
\label{alpha-satisfies}	
\gamma(u,p)=\min \{ q \in [0,1]: \underu+q(u-\underu) \geqslant p u\}.
\end{equation}
 Let 
     \begin{equation}
     \label{regretm}
     R^m = \max_{(u,v)\in D}\min_{p\in [0,1]}\max \left\{(1-p)v, \gamma(u,p)(1-v)\right\}.     	
     \end{equation}
\begin{enumerate}[(i)]
\vspace{-0.35cm} \item The $\wcr$ under any mechanism is at least $R^m$. 
\vspace{-0.35cm} \item Let $\rho^m$ be the PMP mechanism induced by 
    \begin{equation}\label{etam}\alpha^m(u,v)=\max\{p\in [0,1]: \gamma(u,p)(1-v)\leqslant R^m\}. \end{equation}
    Under $\rho^m$, it is optimal for the agent to propose his type $A$. The corresponding choice function $\rho^m$ has the $\wcr$ of $R^m$. It is admissible for $\under v>0$.
  \vspace{-0.35cm}  \item \label{multithree} If $\rho$ is an IC and admissible mechanism which has the $\wcr$ of $R^m$, then $U(\rho,A)\leqslant U(\rho^m,A)$ for every type $A$.
\end{enumerate}
\end{theorem}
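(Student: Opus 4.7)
My plan follows the three parts in order, using in each the hard comparison type $\{(u,v),(\underu,1)\}$ already identified in subsection~\ref{se:pre}. For part~(i), I fix $(u,v)\in D$ and consider the two agent types $A_1=\{(u,v)\}$ and $A_2=\{(u,v),(\underu,1)\}$. Let $p=\rho((u,v)\mid\{(u,v)\})$ be the singleton approval probability. The regret at $A_1$ equals $(1-p)v$. By Proposition~\ref{pr:revelation}, the agent's expected payoff at $A_2$ must be at least $pu$, and any subprobability supported on the two-element set $A_2$ that delivers this payoff places mass at least $\gamma(u,p)$ on $(u,v)$; hence the principal's expected payoff at $A_2$ is at most $\gamma(u,p)v+(1-\gamma(u,p))$ and the regret there is at least $\gamma(u,p)(1-v)$. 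Taking the larger of the two regrets, minimizing over $p$, and maximizing over $(u,v)$ yields $\wcr(\rho)\geq R^m$.

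For part~(ii), $\rho^m$ is IC because the agent's payoff under a PMP mechanism is monotone in the proposal. For the regret bound, on a singleton $\{(u,v)\}$ any $p^*$ attaining the inner min in \eqref{regretm} satisfies both $(1-p^*)v\leq R^m$ and $\gamma(u,p^*)(1-v)\leq R^m$; definition \eqref{etam} then gives $\alpha^m(u,v)\geq p^*$ and hence $(1-\alpha^m(u,v))v\leq R^m$. For a type $A$ with $|A|\geq 2$, let $(u^*,v^*)\in A$ achieve $U^*=\max_{(u,v)\in A}\alpha^m(u,v)u$ and let $(\bar u,\bar v)\in A$ achieve $\bar v=\max_{(u,v)\in A}v$. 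If $\bar u\geq U^*$ then placing probability one on $(\bar u,\bar v)$ already gives regret zero; otherwise mixing $(u^*,v^*)$ with weight $q^*=(U^*-\bar u)/(u^*-\bar u)$ and $(\bar u,\bar v)$ with weight $1-q^*$ delivers agent payoff $U^*$ and regret $q^*(\bar v-v^*)$. Since $x\mapsto(U^*-x)/(u^*-x)$ is decreasing on $x\leq U^*$ and $\bar u\geq\underu$, one has $q^*\leq\gamma(u^*,\alpha^m(u^*,v^*))$, and combining with $\bar v-v^*\leq 1-v^*$ bounds the regret by $\gamma(u^*,\alpha^m(u^*,v^*))(1-v^*)\leq R^m$. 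The LP-optimal mixture used by $\rho^m$ does at least as well. For admissibility when $\underv>0$, any $\rho'$ that weakly dominates $\rho^m$ must agree with it on singletons (dominance and $v\geq\underv>0$ force $\alpha'\geq\alpha^m$; part~(i) applied to $\rho'$ forces the reverse) and then cannot improve on the PMP LP optimum at any multi-project type.

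For part~(iii), the same two-type bound applied to $\rho$ gives $\alpha(u,v)\leq\alpha^m(u,v)$ at every singleton, so $U(\rho,\{(u,v)\})\leq U(\rho^m,\{(u,v)\})$. For a general $A$, I argue that admissibility combined with IC forces $U(\rho,A)=\max_{(u,v)\in A}\alpha(u,v)u$, after which the singleton bound yields $U(\rho,A)\leq\max_{(u,v)\in A}\alpha^m(u,v)u=U(\rho^m,A)$. By outer induction on $|A|$, suppose the identity has been established for all subsets of $A$. If $U(\rho,A)>\max_{(u,v)\in A}\alpha(u,v)u$, then replacing $\rho(\cdot\mid A)$ by the LP-optimal subprobability with target $\max_{(u,v)\in A}\alpha(u,v)u$ preserves IC (the inductive hypothesis guarantees the new agent payoff at $A$ exceeds every $U(\rho,P)=\max_{(u,v)\in P}\alpha(u,v)u$ with $P\subsetneq A$) and strictly raises the principal's payoff at $A$ whenever the LP-principal is strictly decreasing at the relevant target, contradicting admissibility. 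In the degenerate flat case where no strict improvement at $A$ is available, I instead apply IC to $A^+=A\cup\{(\underu,1)\}$: the WCR constraint at $A^+$ together with an explicit LP computation (exploiting the closed form $\alpha^m(u,v)u=\underu+R^m(u-\underu)/(1-v)$ in the region $v<1-R^m$) bounds the agent payoff at $A^+$ by $\max_{(u,v)\in A}\alpha^m(u,v)u$, and by IC $U(\rho,A)\leq U(\rho,A^+)$ then gives the same bound.

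The main obstacle is the flat-region case in part~(iii): there, the admissibility dominance argument at $A$ alone does not pin down the agent payoff, and one must combine it with the LP bound at the enlarged type $A\cup\{(\underu,1)\}$ to obtain $U(\rho,A)\leq U(\rho^m,A)$. The remaining steps amount to the part~(i) two-type comparison, the monotonicity of $\gamma(u,\cdot)$, and the fact (footnote~7 of the paper) that the PMP LP always has a two-point optimal solution, which makes the explicit mixture in part~(ii) extremal.
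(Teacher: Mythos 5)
Your part~(i) is the paper's argument verbatim, and the two-point mixture you build for part~(ii) is essentially the paper's Claim~\ref{cl:optimal}. But two things go wrong. First, a local slip: when the principal's favorite project $(\bar u,\bar v)$ has $\bar u>U^*$, putting probability one on it is \emph{not} feasible for the PMP linear program~\eqref{mm1}, whose constraint is the \emph{equality} $\sum\pi(u,v)u=\underU(P)$ --- the agent would get $\bar u>U^*$ --- so ``the LP-optimal mixture does at least as well'' does not follow. You must scale down to mass $U^*/\bar u$ (the singleton bound $\bar v\bigl(1-\alpha^m(\bar u,\bar v)\bigr)\leqslant R^m$ then still controls the regret), or, as the paper does, treat the two one-sided relaxations of the equality separately and take a convex combination. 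Second, and more substantively, your admissibility argument asserts that a competitor agreeing with $\rho^m$ on singletons ``cannot improve on the PMP LP optimum at any multi-project type,'' but that is precisely what needs proof: such a competitor could deliver the agent a payoff strictly above $\underU(P)$ at some $P$ and thereby earn the principal more. Ruling this out requires (a) the upper bound $U(\rho,P)\leqslant\overU(P)$ coming from IC plus the regret constraint at $P\cup\{(\underu,1)\}$ (the paper's program~\eqref{overu} and Claim~\ref{cl:whyover}), and (b) the fact that relaxing the equality in~\eqref{mm1} to $\underU(P)\leqslant\sum\pi u\leqslant\overU(P)$ does not change the optimum (the paper's Lemma~\ref{le:thelemma}). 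Neither ingredient appears in your part~(ii).

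For part~(iii) you take a genuinely different route --- induction on $|A|$ with a local replacement of $\rho(\cdot|A)$ and an appeal to admissibility --- whereas the paper builds one hybrid mechanism $\tilde\rho$ that plays $\rho^m(\cdot|P)$ wherever $U(\rho,P)\geqslant\underU(P)$ and $\rho(\cdot|P)$ elsewhere, checks that $\tilde\rho$ is IC with the same $\wcr$ and weakly dominates $\rho$, and lets admissibility force $U(\rho,P)\leqslant\underU(P)$ everywhere in one stroke, with no induction and no flatness cases. Your approach stalls exactly where you flag it: in the flat case the replacement yields no strict improvement, so admissibility gives nothing, and the computation at $A^+=A\cup\{(\underu,1)\}$ that you sketch (which is again Claim~\ref{cl:whyover}) only delivers $U(\rho,A)\leqslant\overU(A)$; this closes the gap only when $\overU(A)=\underU(A)$, i.e., when every project in $A$ is bottom-tier (Case~2 of Lemma~\ref{le:thelemma}). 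When $A$ contains a top-tier project, the flat region of the LP value sits inside the convex hull of the $u$-coordinates of the principal's favorite projects, and you need the separate observation that $\alpha^m=1$ there puts this interval below $\underU(A)$. Either supply that case explicitly or adopt the paper's hybrid construction, which sidesteps the case analysis entirely.
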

We now explain the intuition behind $\gamma(u,p),\text{ } R^m, \text{ and }\alpha^m(u,v)$ in \eqref{alpha-satisfies} to \eqref{etam}. Pick any project $(u,v)$ and suppose $p$ is the approval probability if the agent proposes $(u,v)$ alone, so he expects a payoff of $pu$ from doing so. Then, if the agent has both $(u,v)$ and $(\underu,1)$ available, he must expect a weakly higher payoff from proposing both projects than from proposing $(u,v)$ alone. This implies that, when the agent proposes both projects, the probability of choosing the agent's preferred project $(u,v)$ is at least $\gamma (u,p)$ while the probability of choosing the principal's preferred project $(\underu,1)$ is at most $1-\gamma(u,p)$. %Naturally, the function $\gamma(u,p)$ increases in $p$. 
The principal's $\wcr$ is then at least $\max\{(1-p)v, \gamma(u,p)(1-v) \}$. The first term, $(1-p)v$, is the regret when $(u,v)$ is the only available project, while the second term, $\gamma(u,p)(1-v)$, reflects the regret when both $(u,v)$ and $(\underu,1)$ are available. The operators $\min_{p\in[0,1]}$ and $\max_{(u,v) \in D}$ in $R^m$ reflect, respectively, the fact that the principal can choose $p$ and the fact that the argument holds for any $(u,v)\in D$. Lastly, $\alpha^m(u,v)$ is the highest approval probability for $(u,v)$ when it is proposed alone, such that (i) the agent can be incentivized to propose both available projects when his type is $A=\{(u,v),(\underu,1)\}$, and (ii) the regret from type $A$ can be kept under $R^m$.

% This implies that the probability of choosing the agent's preferred project $(u,v)$ over the principal's preferred project $(\underu,1)$ is at least $\gamma(u,p)$ when the agent proposes both projects. Naturally, the function $\gamma(u,p)$ increases in $p$. 

%  and is rejected with positive probability otherwise

The explicit expressions for $R^m$ and $\alpha^m(u,v)$ are presented at the end of this subsection. It follows from expressions \eqref{alpha-satisfies} to \eqref{etam} that $\alpha^m(u,v)=1$ if $v \geqslant 1-R^m$ or $u=\under u$, and $\alpha^m(u,v)<1$ otherwise. As in the case of the single-project environment, when the agent proposes only one project, it is approved for sure if its payoff to the principal is sufficiently high. For this reason, we call a project with $v \geqslant 1-R^m$ a top-tier project and a project with $v < 1-R^m$ a bottom-tier project. Figure \ref{fg:compromise} depicts these two tiers when $\underu=\underv=0$.

When the agent proposes multiple projects, the principal promises the agent an expected payoff of $\max_{(u,v)\in P} \alpha^m(u,v)u$. In both panels of Figure \ref{fg:compromise}, each dotted curve connects all the projects that induce the same value of $\alpha^m(u,v)u$, so it can be interpreted as an ``indifference curve'' for the agent. For a project in the top tier, the principal is willing to compensate the agent his full payoff from such a project. In contrast, for a project in the bottom tier, the principal is willing to compensate the agent only a discounted payoff. The lower the project's payoff to the principal, the more severe the discounting. Hence, indifference curves are vertical in the top tier and downward sloping from left to right in the bottom tier.

The agent's expected payoff is determined by the project (among those proposed) that is on the highest indifference curve. This is the project that the agent himself wants to propose.

\begin{figure}[htb!]
\begin{center}
\psset{xunit=4.5cm,yunit=4.5cm}

\begin{pspicture}(-.31,-.33)(2.94,1.1)

%\psline(-.31,-.33)(2.94, -.33)(2.94,1.1)(-.31,1.1)(-.31,-.33)

\rput[c](.5,-.18){Left panel: Agent's favorite project}
\rput[c](.5,-.28){is in the top tier.}

\rput[c](1.4,0.892){top tier}
\psline[arrows=<-,arrowscale=1.6](1.01,.892)(1.16,.892)
\psline[arrows=->,arrowscale=1.6](1.64,.892)(1.79,.892)
\rput[c](1.4,0.375){bottom tier}
%\rput[c](1.4,0.335){tier}
\psline[arrows=<-,arrowscale=1.6](1.01,.375)(1.16,.375)
\psline[arrows=->,arrowscale=1.6](1.64,.375)(1.79,.375)

%\rput[c](.5,-.18){left panel: $\underv=0$}
%\rput[c](2,-.18){right panel: $\underu=0$}

\psaxes{->,ticks=none,labels=none}(0,0)(0,0)(1.05,1.05)

\rput[c](1.08,0){$u$}
\rput[c](1,-.06){$1$}

\rput[c](0,1.08){$v$}
\rput[r](-.02,1){$1$}
\rput[c](-.02,-.06){$0$}

\psline[linecolor=black](1,-.02)(1,.02)
\psline[linecolor=black](-.02,1)(.02,1)

\psline[linecolor=black](0,1)(1,1)(1,0)

\psline[linecolor=black,linestyle=dashed, linewidth=1.6pt](0,.75)(1,.75)
\rput[r](-.02,.75){$1-R^m$}

% indiff lines:

\psline[linecolor=black, linestyle=dotted, linewidth=1.6pt](.8,1)(0.8, 0.75)(1., 0.6875)
\psline[linecolor=black, linestyle=dotted, linewidth=1.6pt](.5,1)(0.5, 0.75)(1., 0.5)
\psline[linecolor=black, linestyle=dotted, linewidth=1.6pt](.2,1)(0.2, 0.75)(0.8, 0.)

%\psdot*[dotstyle=triangle*,dotsize=6pt](.7,.8)
%\psdot*[dotstyle=pentagon*,dotsize=6pt](.4,.9)

\rput[c](.7,.83){{\large $\blacktriangle$}}
\rput[c](.4,.93){$\bigstar$}

%\psdot*[dotstyle=triangle*,dotsize=6pt](2.2,.4)
%\psdot*[dotstyle=pentagon*,dotsize=6pt](1.55,.9)

%\rput[c](.25,0.785714){$R^s(\underu,0)$}
%\rput[c](.25,0.414624){$R^m(\underu,0)$}

%%% second half

\psaxes{->,ticks=none,labels=none}(1.8,0)(1.8,0)(2.85,1.05)

\rput[c](2.88,0){$u$}
\rput[c](1.8,-.06){$0$}
\rput[c](2.8,-.06){$1$}

\rput[c](1.8,1.08){$v$}
\rput[r](1.78,1){$1$}

\psline[linecolor=black](1.78,1)(1.82,1)
\psline[linecolor=black](2.8,-.02)(2.8,.02)

\psline[linecolor=black](1.8,1)(2.8,1)(2.8,0)

\psline[linecolor=black,linestyle=dashed,linewidth=1.6pt](1.8,.75)(2.8,.75)
\rput[r](1.78,.75){$1-R^m$}

% indiff curves:
\psline[linecolor=black,linestyle=dotted,linewidth=1.6pt](2.6,1)(2.6, 0.75)(2.8, 0.6875)
\psline[linecolor=black, linestyle=dotted, linewidth=1.6pt](2.3,1)(2.3, 0.75)(2.8,0.5)
\psline[linecolor=black,linestyle=dotted,linewidth=1.6pt](2.0,1)(2.0, 0.75)(2.6, 0.)

% the agent's preferred
\rput[c](2.05,0.6875){$\blacksquare$}
\rput[c](1.96,.95){$\bigstar$}
\rput[c](2.4,.214){{\large $\blacktriangle$}}

\rput[c](2.0,-0.05){$x$}
\psline[linestyle=dashed,linewidth=.2pt](2,.75)(2, 0)

%\rput[c](2.5,.4){{\large $\blacktriangle$}}
%\rput[c](1.9,.85){$\bigstar$}

\rput[c](2.3,-.18){Right panel: Agent's favorite project}
\rput[c](2.3,-.28){is in the bottom tier.}

\end{pspicture}
\end{center}
\vspace{-6 mm}
\caption{Optimal mechanism in the multiproject environment, $\underu=\underv=0$}
\label{fg:compromise}
\end{figure}
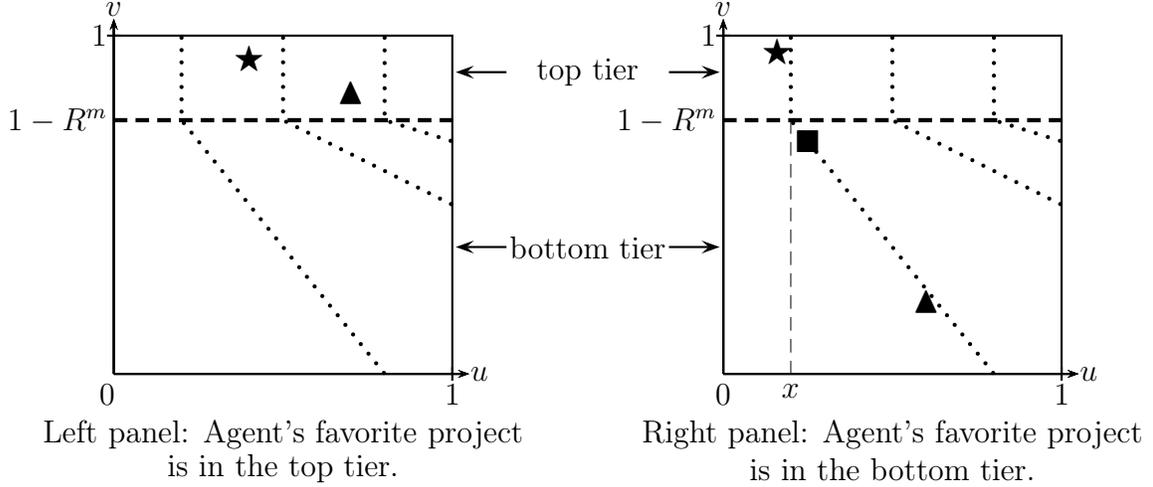

So when does the principal benefit from the agent's also proposing other projects? Figure \ref{fg:compromise} gives an intuitive answer. If the agent's favorite project is already a top-tier project for the principal, then this project will be chosen for sure, so there is no benefit from also proposing other projects. The left panel gives such an example: $\bigstar$ and $\blacktriangle$ denote the available projects and $\blacktriangle$ will be chosen for sure. In contrast, if the agent's favorite project is a bottom-tier project for the principal, the benefit from the agent's proposing other projects as well can be significant. The right panel illustrates such an example. Among the available projects, project $\blacksquare$ is on the highest indifference curve, so it pins down the maximal expected payoff the agent could get from proposing each project alone. This maximal payoff is denoted by $x$ in the figure. The mechanism randomizes between $\blacktriangle$ and $\bigstar$ while promising the agent an expected payoff of $x$. In such cases, multiple projects in the proposal have a chance of being chosen: sometimes the choice favors the agent, and at other times it favors the principal.

% Among the available projects, project $\blacksquare$ is on the highest indifference curve, so it is what the agent himself wants to propose. If the agent proposed $\blacksquare$ alone, his expected payoff would be $x$. The mechanism randomizes between $\blacktriangle$ and $\bigstar$ while promising the agent the expected payoff $x$ that he would get from proposing $\blacksquare$ alone. In such cases, multiple projects in the proposal have a chance of being chosen: sometimes the choice favors the agent, and at other times it favors the principal. 

Statement (\ref{multithree}) in Theorem \ref{th:multiproject} says that the mechanism $\rho^m$ is the agent's most preferred mechanism among all the mechanisms that have the $\wcr$ of $R^m$ and are admissible. This preference for the mechanism $\rho^m$ holds for every possible type $A$ of the agent. 

Lastly, the explicit expressions for $R^m$ and $\alpha^m$ are given below. The probability $\alpha^m(u,v)$ is increasing in $v$ and decreasing in $u$. It is strictly so in the bottom tier.
\[
R^m=\begin{cases}
 \frac{(1-\underu) \left(2-\underu-2 \sqrt{1-\underu}\right)}{\underu^2} & \text{ if }\underv<\frac{1-\sqrt{1-\underu}}{\underu}; \\
 \frac{(1-\underu) (1-\underv) \underv}{1-\underu \; \underv} & \text{ otherwise};
\end{cases}\]
and
\[
\alpha^m(u,v)= \begin{cases}1, &\text{ if }v\geqslant 1-R^m \text{ or } u=\under u;\\ \left(1-\frac{R^m}{1-v} \right) { \under u}/{u}+\frac{R^m}{1-v}, &\text{ if }v<1-R^m \text{ and } u>\under u. \end{cases}\]   
%\begin{remark}
%The PMP mechanism $\rho^m$ is weakly IC. We could modify the mechanism to make it strictly IC. To do this, let $h:[0, \infty) \rightarrow [0,1]$ be a strictly increasing function and $|P|$ the number of projects in the proposal $P$. If we modify $\alpha^m$ with $(1-\varepsilon + \varepsilon h(|P|))\alpha^m$ for some small $\varepsilon>0$, then the PMP mechanism induced by this modified function has the property that if the agent has some available project with $u>0$ then it is strictly optimal for him to propose all of them. The $\wcr$ under this modified mechanism approaches $R^m$ as $\varepsilon\rightarrow 0$. (If all available projects gives $0$ payoff to the agent then there is no way to strictly incentivize him to propose.)
%\eran{I changed this remark a bit to take care of the case that all projects have $u=0$} 
%\end{remark}

\begin{remark}
The PMP mechanism $\rho^m$ is weakly IC. We can modify this mechanism to make it strictly IC. To do this, let $h:[0, \infty) \rightarrow [0,1]$ be a strictly increasing function and $|P|$ the number of projects in the proposal $P$. We modify $\alpha^m$ with $(1-\varepsilon + \varepsilon h(|P|))\alpha^m$ for some small $\varepsilon>0$ and consider the PMP mechanism induced by this modified function. If the agent has some projects with $u>0$, then it is strictly optimal for him to propose all the available projects. The $\wcr$ under this mechanism approaches $R^m$ as $\varepsilon\rightarrow 0$. (If the agent has only projects with $u=0$, then there is no way to strictly incentivize him to propose.)
\end{remark}

\subsection{Comparing the $\wcr$ under the two environments}

According to the definitions in \eqref{r-one} and \eqref{regretm}, both $R^s$ and $R^m$ decrease in $\underu$ and $\underv$. Figure \ref{fg:cutoff} compares the values of $R^s$ and $R^m$ as $\underu$ or $\underv$ varies. The left panel depicts the $\wcr$ as a function of $\under u$ for a fixed $\under v$. The right panel depicts the $\wcr$ as a function of $\under v$ for a fixed $\under u$. Roughly speaking, the principal's gain from having the multiproject environment instead of the single-project environment, as measured by $R^s-R^m$, is larger when $\under u$ or $\under v$ is smaller (i.e., when the principal faces more uncertainty or when players can potentially have stronger preferences over projects).  

\begin{figure}[htb!]
\begin{center}
\psset{xunit=5.4cm,yunit=3.375cm}

\begin{pspicture}(-.1,-.2)(2.60,1.2)

\rput[c](.5,-.18){Left panel: $\underv=0$}
\rput[c](2,-.18){Right panel: $\underu=0$}

\psaxes{->,ticks=none,labels=none}(0,0)(0,0)(1.05,1)

\rput[c](1.08,0){$\underu$}
\rput[c](0,-.08){$0$}
\rput[c](1,-.08){$1$}

\rput[c](0,1.08){$\wcr$}
\rput[r](-.04,.8){$1/2$}
\rput[r](-.04,.4){$1/4$}

\psline[linecolor=black](-.02,.8)(0,.8)
\psline[linecolor=black](-.02,.4)(0,.4)
\psline[linecolor=black](1,-.02)(1,.02)

\rput[c](.25,0.775714){$R^s$}%{$R^s(\underu,0)$}
\rput[c](.25,0.414624){$R^m$}
%{$R^m(\underu,0)$}

\psline[linecolor=black,linestyle=dashed,linewidth=1.6pt](0., 0.8)(0.025, 0.789873)(0.05, 0.779487)(0.075,0.768831)(0.1, 0.757895)(0.125, 0.746667)(0.15,0.735135)(0.175, 0.723288)(0.2, 0.711111)(0.225,0.698592)(0.25, 0.685714)(0.275, 0.672464)(0.3,0.658824)(0.325, 0.644776)(0.35, 0.630303)(0.375,0.615385)(0.4, 0.6)(0.425, 0.584127)(0.45, 0.567742)(0.475,0.55082)(0.5, 0.533333)(0.525, 0.515254)(0.55,0.496552)(0.575, 0.477193)(0.6, 0.457143)(0.625,0.436364)(0.65, 0.414815)(0.675, 0.392453)(0.7,0.369231)(0.725, 0.345098)(0.75, 0.32)(0.775, 0.293878)(0.8,0.266667)(0.825, 0.238298)(0.85, 0.208696)(0.875,0.177778)(0.9, 0.145455)(0.925, 0.111628)(0.95,0.0761905)(0.975, 0.0390244)(1., 0.)

\psline[linecolor=black,linewidth=1.6pt](0.,0.4)(0.025,0.394953)(0.05,0.389808)(0.075,0.384562)(0.1,0.37921)(0.125,0.373749)(0.15,0.368174)(0.175,0.362479)(0.2,0.35666)(0.225,0.35071)(0.25,0.344624)(0.275,0.338396)(0.3,0.332017)(0.325,0.325481)(0.35,0.318779)(0.375,0.311902)(0.4,0.30484)(0.425,0.297583)(0.45,0.290119)(0.475,0.282435)(0.5,0.274517)(0.525,0.266348)(0.55,0.257913)(0.575,0.24919)(0.6,0.240158)(0.625,0.230792)(0.65,0.221063)(0.675,0.210938)(0.7,0.20038)(0.725,0.189344)(0.75,0.177778)(0.775,0.165618)(0.8,0.152786)(0.825,0.139189)(0.85,0.124701)(0.875,0.109164)(0.9,0.0923545)(0.925,0.0739498)(0.95,0.0534326)(0.975,0.0298234)(1.,0.)

\psaxes{->,ticks=none,labels=none}(1.5,0)(1.5,0)(2.55,1)
\rput[c](2.58,0){$\underv$}
\rput[c](1.5,-.08){$0$}
\rput[c](2.5,-.08){$1$}

\rput[c](1.5,1.08){$\wcr$}
\rput[r](1.46,.8){$1/2$}
\rput[r](1.46,.4){$1/4$}

\psline[linecolor=black](1.48,.8)(1.5,.8)
\psline[linecolor=black](1.48,.4)(1.5,.4)
\psline[linecolor=black](2.5,-.02)(2.5,.02)

\rput[c](1.75,0.86){$R^s$}%{$R^s(0,\underv)$}
\rput[c](1.75,0.46){$R^m$}
%{$R^m(0,\underv)$}

\psline[linecolor=black,linestyle=dashed,linewidth=1.6pt](1.5,0.8)(1.525,0.8)(1.55,0.8)(1.575,0.8)(1.6,0.8)(1.625,0.8)(1.65,0.8)(1.675,0.8)(1.7,0.8)(1.725,0.8)(1.75,0.8)(1.775,0.8)(1.8,0.8)(1.825,0.8)(1.85,0.8)(1.875,0.8)(1.9,0.8)(1.925,0.8)(1.95,0.8)(1.975,0.8)(2.,0.8)(2.025,0.76)(2.05,0.72)(2.075,0.68)(2.1,0.64)(2.125,0.6)(2.15,0.56)(2.175,0.52)(2.2,0.48)(2.225,0.44)(2.25,0.4)(2.275,0.36)(2.3,0.32)(2.325,0.28)(2.35,0.24)(2.375,0.2)(2.4,0.16)(2.425,0.12)(2.45,0.08)(2.475,0.04)(2.5,0.)

\psline[linecolor=black,linewidth=1.6pt](1.5,0.4)(1.525,0.4)(1.55,0.4)(1.575,0.4)(1.6,0.4)(1.625,0.4)(1.65,0.4)(1.675,0.4)(1.7,0.4)(1.725,0.4)(1.75,0.4)(1.775,0.4)(1.8,0.4)(1.825,0.4)(1.85,0.4)(1.875,0.4)(1.9,0.4)(1.925,0.4)(1.95,0.4)(1.975,0.4)(2.,0.4)(2.025,0.399)(2.05,0.396)(2.075,0.391)(2.1,0.384)(2.125,0.375)(2.15,0.364)(2.175,0.351)(2.2,0.336)(2.225,0.319)(2.25,0.3)(2.275,0.279)(2.3,0.256)(2.325,0.231)(2.35,0.204)(2.375,0.175)(2.4,0.144)(2.425,0.111)(2.45,0.076)(2.475,0.039)(2.5,0.)

\end{pspicture}
\end{center}
\vspace{-6 mm}
\caption{$\wcr$: single-project (dashed curve) vs. multiproject (solid curve)}
\label{fg:cutoff}
\end{figure}
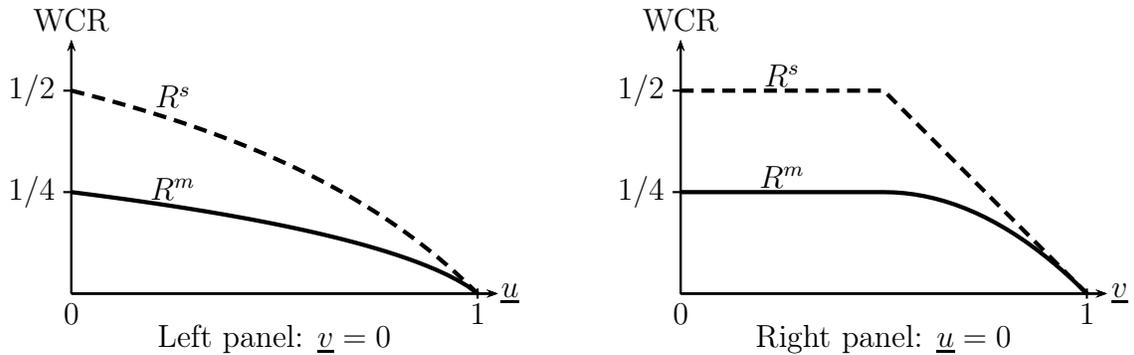

\section{Extensions and discussion}

\subsection{Intermediate environments}\label{se:intermediate}

So far, we have focused on the single-project and the multiproject environments, which are natural first steps. Nevertheless, there are intermediate environments in which the agent can propose up to $K$ projects for some fixed $K \geqslant 2$, so that $\cale=\{P\subseteq D: |P|\leqslant K \}$. We call this case the $K$-project environment. The following proposition shows that as long as $K\geqslant 2$, the principal's minimal $\wcr$ is $R^m$. 
\begin{proposition}[Two is enough in minimizing $\wcr$]\label{prop:two} Consider the $K$-project environment. For any $K\geqslant 2$,
\begin{enumerate}[(i)]
\vspace{-0.35cm}\item the $\wcr$ under any mechanism is at least $R^m$;
\vspace{-0.35cm}\item under the PMP mechanism $\rho^m$ induced by $\alpha^m(u,v)$, it is optimal for the agent to propose the principal's favorite project and the project that maximizes $\alpha^m(u,v)u$. The corresponding choice function has the $\wcr$ of $R^m$.
\end{enumerate} 
\end{proposition}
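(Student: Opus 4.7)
The plan is to leverage Theorem~\ref{th:multiproject} directly: the lower bound in part~(i) was already obtained from types of size at most two, and the upper bound in part~(ii) will follow by evaluating the existing PMP mechanism $\rho^m$ on a two-element proposal, which is admissible whenever $K\geqslant 2$. For part~(i), the derivation of $R^m$ in Theorem~\ref{th:multiproject}(i) fixes an arbitrary project $(u,v)$ and considers only the types $A=\{(u,v)\}$ and $A=\{(u,v),(\underu,1)\}$, together with the IC link between them. Both of these types have at most two elements, and for each of them the set of allowed proposals in the $K$-project environment (with $K\geqslant 2$) is exactly the same as in the multiproject environment. Hence every step of the lower-bound derivation transfers verbatim and yields $\wcr\geqslant R^m$ under any mechanism.

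For part~(ii), fix a type $A$ and choose $a^\dagger\in\argmax_{a\in A}v(a)$, $(u^*,v^*)\in\argmax_{a\in A}\alpha^m(a)u(a)$, and let $A'=\{a^\dagger,(u^*,v^*)\}$; since $|A'|\leqslant 2\leqslant K$, we have $A'\in\cale$. Under the PMP mechanism $\rho^m$, by construction the agent's expected payoff from any proposal $P$ equals $\max_{a\in P}\alpha^m(a)u(a)$, so every proposal that contains $(u^*,v^*)$ delivers the agent his maximal attainable payoff $\max_{a\in A}\alpha^m(a)u(a)$. Proposing $A'$ is therefore weakly optimal for the agent, and via the tie-breaking selection built into the definition of implementation I take $A'$ to be the proposal that defines the induced choice function.

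When the agent proposes $A'$, the subprobability measure $\rho^m(\cdot\mid A')$ depends only on the proposal, so it coincides with the distribution that $\rho^m$ produces in the multiproject environment for an agent whose type equals $A'$. Because $a^\dagger\in A'$ and $a^\dagger$ is a $v$-maximizer in $A$, one has $\max_{a\in A'}v(a)=v(a^\dagger)=\max_{a\in A}v(a)$, so the principal's regret from $A$ in the $K$-project environment equals his regret from $A'$ under $\rho^m$ in the multiproject environment, which is at most $R^m$ by Theorem~\ref{th:multiproject}. Taking the supremum over $A$ then gives $\wcr\leqslant R^m$. There is no real technical obstacle; the main conceptual step is the identity $\max_{a\in A'}v(a)=\max_{a\in A}v(a)$, which records that keeping just $a^\dagger$ and $(u^*,v^*)$ in the proposal both pins down the agent's promised payoff and preserves the principal's $v$-maximizer, so the regret bound already established on two-element types automatically lifts to arbitrary types in every $K$-project environment with $K\geqslant 2$.
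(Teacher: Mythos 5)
Your proof is correct and follows essentially the same route as the paper: part (i) by observing that the lower-bound argument of Theorem~\ref{th:multiproject}(i) only uses types of size at most two, and part (ii) by having the agent propose the two-element set consisting of the principal's favorite project and the $\alpha^m(u,v)u$-maximizer, then invoking the regret bound for $\rho^m$ on that proposal. The only cosmetic difference is that you spell out the tie-breaking in the definition of implementation and the identity $\max_{a\in A'}v(a)=\max_{a\in A}v(a)$, which the paper leaves implicit.
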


\begin{proof}
Let $A$ be the set of available projects. Let $(u_p,v_p)\in \argmax\{v:(u,v)\in A\}$ be a principal's favorite project. Let $(u_a,v_a)\in \argmax\{\alpha^m(u,v)u:(u,v)\in A\}$ be a project that gives the agent his maximal payoff from proposing each project alone. Let $P=\{(u_p,v_p), (u_a,v_a)\}$. Under the PMP mechanism $\rho^m$ induced by $\alpha^m(u,v)$, the agent is willing to propose $P$ since this proposal gives him $\alpha^m(u_a,v_a)u_a$, the maximal payoff he can get under $\rho^m$. The principal's payoff given proposal $P$ equals his payoff if the set of available projects were actually $P$. By Theorem~\ref{th:multiproject} this payoff is at least $v_p - R^m$, so the principal's regret is at most $R^m$.
\end{proof}

Based on Theorem \ref{th:one-project} and Proposition \ref{prop:two}, the principal's minimal $\wcr$ drops distinctly from $R^s$ to $R^m$ as $K$ goes from one to two or above. If the agent can propose at most one project, the only fallback option for the principal is to reject the proposed project. By contrast, if the agent can propose two or more projects, he can provide the principal with a much better fallback option than rejection. This is exactly how the principal benefits from allowing the agent to propose more than one project.

Specifically, the agent can propose the following two projects: the project that gives the agent his maximal payoff from proposing each project alone, and the principal's favorite project. The first project is what the agent himself wants to propose. The second project can be the fallback option when the first project is not chosen. According to the proof of Proposition \ref{prop:two}, the randomization between these two projects alone already ensures that the principal's $\wcr$ does not exceed $R^m$.

This result also suggests a parsimonious way to implement the PMP mechanism $\rho^m$ in the multiproject environment---a way which keeps the principal's $\wcr$ at the level of $R^m$. Instead of asking the agent to propose all the available projects, the principal can ask him to propose the aforementioned two projects: the one that gives the agent his maximal payoff from proposing each project alone and the one that the principal likes the most. This two-project implementation, however, leads to a choice function that is weakly dominated by the choice function if the agent proposed all the available projects, as shown by the right panel of Figure \ref{fg:compromise}. If the agent proposes all three projects, the mechanism randomizes between $\bigstar$ and $\blacktriangle$ while promising the agent the same payoff he would get from proposing $\blacksquare$ alone. If the agent proposes two projects, $\blacksquare$ and $\bigstar$, the mechanism randomizes between these two while promising the agent the same payoff he would get from proposing $\blacksquare$ alone; this leads to a strictly lower payoff for the principal than if the agent proposed all three projects. 

Lastly, for any $K \geqslant 3$, the principal can implement the same choice function as that in statement (ii) of Theorem \ref{th:multiproject}. This is because the linear programming problem that describes mechanism $\rho^m$ (problem~\eqref{mm1} in section~\ref{se:proofs}) has an optimal solution whose support has at most two projects. The agent can propose these two projects and the project that gives the agent his maximal payoff from proposing each project alone.

%\eran{Added the following paragraph for $K\geqslant 3$, yg: work on this.}
%However, the principal can implement the PMP mechanism $\rho^m$ if the agent is required to propose at most three projects. This is because the linear program that describes $\rho^m$ (Problem~\eqref{mm1} in Section~\ref{se:proofs}) has an optimal solution $\pi$ with support of at most two projects. Therefore, the principal can ask the agent to propose these two projects and the project that gives the agent his maximal payoff from proposing a single project. 

\subsection{Cheap-talk communication does not help for any $\cale$}  \label{se:cheaptalk}
%We could have started from a more general definition of a mechanism that chooses a project based on both the proposal $P$ and a cheap-talk message $m$ from the agent. However, in our model, cheap talk does not benefit the principal. This is because the principal can choose a project only from the proposed set $P$ and he knows the payoffs that each project in $P$ gives to both parties. Hence, there is no relevant information asymmetry left after the agent proposes $P$, and so there is no benefit to cheap talk. 

We could have started from a more general definition of a mechanism that chooses a project based on both the proposal $P$ and a cheap-talk message $m$ from the agent. However, cheap talk does not benefit the principal in our model, because the principal observes the payoffs that each project in $P$ gives to both parties. There is no relevant information asymmetry left after the agent proposes $P$, so there is no benefit to cheap talk. (If, instead, the principal does not observe the agent's payoffs from projects in $P$, then cheap talk may help.)
% (If the principal observes only the payoffs to himself but not the payoffs to the agent, then cheap talk can help.)

More specifically, for any $P$ and any cheap-talk messages $m_1,m_2$, we argue that it is without loss for the principal to choose the same subprobability measure over $P$ after $(P,m_1)$ and after $(P,m_2)$. Suppose otherwise that the principal chooses a measure $\pi_1 $ after $(P,m_1)$ and chooses $\pi_2 $ after $(P,m_2)$. If the agent strictly prefers $\pi_1$ to $\pi_2$, then he can profitably deviate to $(P,m_1)$ whenever he is supposed to say $(P,m_2)$. Hence, $(P,m_2)$ never occurs on the equilibrium path. If the agent is indifferent between $\pi_1$ and $\pi_2$, then the principal can pick his preferred measure between $\pi_1$ and $\pi_2$ after both $(P,m_1)$ and $(P,m_2)$, without affecting the agent's incentives. This argument does not depend on the exogenous restriction $\cale$ on the agent's proposal $P$, so cheap-talk communication does not help for any $\cale$.

\subsection{The commitment assumption}

Commitment is crucial for the principal to have some ``bargaining power'' in the project-choice problem. If the principal has no commitment power, sequential rationality requires that he choose his favorite project among the proposed projects. The agent now has all the bargaining power. He will propose only his favorite project, which will be chosen for sure.

In the multiproject environment, the full-commitment solution involves two types of ex post suboptimality. First, it is possible that no project is chosen even though the agent has proposed some. Second, it is possible that a worse project for the principal is chosen even though a better project for him is also proposed. Some applications may fall between the full-commitment and the no-commitment settings: the principal can commit to choosing no project but cannot commit to choosing a worse project when a better project is also proposed. In such a partial-commitment setting, a multiproject proposal is effectively a single-project proposal which contains only the principal's favorite project among the proposed ones. The optimal mechanism in this partial-commitment setting is then the same as that in the single-project environment characterized in Theorem \ref{th:one-project}. For a formal discussion of this partial-commitment environment, we refer readers to subsection \ref{sec:partial} and Theorem \ref{thm:partial}. 

\subsubsection{Commitment to a randomized mechanism}
\label{se:committorandomization} % detailed

We have shown that randomization plays a crucial role in counteracting the agent's bias to propose his favorite project and hide his less preferred ones. This raises the question of how an organization implements a randomized mechanism. In addition, if we justify commitment to randomization based on repeated interaction and reputation, a tension might exist between repeated interaction and our prior-free approach.\footnote{We thank anonymous referees for this insightful observation.} We address both points through the lens of an example from \cite{Ross1986}. \cite{Ross1986} surveyed twelve large manufacturers and described how discretionary projects are chosen through a hypothetical example. The example shows how an organization can vary the approval probabilities for different projects using its institutional rules and the reputation of the approval authority. It also illustrates the fact that building reputation through repeated interaction and being able to form a prior belief do not always go hand in hand. 

In this example, an employee has identified an approach to cutting energy costs at a heater using advanced combustion controls. He knows other approaches such as replacing the heater or adding heat exchangers. ``However the projects overlap; he can advocate at most one'' (\cite{Ross1986}, p.\ 16). Levels of approval authority vary with project sizes. The final decisions for small projects (those which cost less than $\$1$ million) are made at division headquarters. For medium projects (those between $\$1$ and $\$10$ million), they are made at the Corporate Investment Committee (CIC). The employee considers whether to propose a small project or a medium one. He has observed that going to the CIC requires ``enthusiastic support...from the division, plant, and facility managers'' (\cite{Ross1986}, p.\ 17). He has also observed that managers give higher priority to new markets or to improving the manufacturing process than to cost cutting, so enthusiastic support ``would be hard to get for a project that cuts cost but has no other production benefit'' (\cite{Ross1986}, p.\ 17). These observations imply a low approval probability for his medium-sized project. He chooses to propose the small one. 

The optimal mechanism in our single-project environment provides one possible explanation for the practice in this example. Projects in high-priority areas correspond to top-tier projects, while those in low-priority areas correspond to bottom-tier projects. Among bottom-tier projects, larger projects require broader and more enthusiastic support, so they face lower approval probabilities. This counteracts employees' tendency to propose larger projects due to empire-building incentives.

The example also suggests when reputation through repeated interaction can coexist with a prior-free approach. The principal can build a reputation for honest randomization if the agent (or an outside observer) observes previous interactions in which the principal committed to some randomized mechanism. In order for the commitment to be statistically testable, these interactions need not be with the same agent or occur in similar situations. In the example, the employee observes decisions on previous proposals by other employees, which spanned various areas and divisions. On the other hand, forming a prior belief requires some stationarity in the environment. Previous situations must be sufficiently similar to the current one, so the principal can use data on previous situations to form a prior belief about the current one. In the example, the employee is advocating a cutting-edge technology in a specific area of energy conservation, so it is unlikely that the principal has encountered many similar situations before. In fact, discretionary projects are mainly on expanded markets, new business, or new technologies for cost cutting, so they are likely to be uncharted territory for the principal. However, being in uncharted territory does not deprive the principal of his reputation for honest randomization.

\section{Conclusion}

%In this paper, we study project-choice problems in which the agent privately observes which projects are available. How to counteract the agent's bias to propose his favorite project and hide his less preferred ones is a first-order concern. We characterize prior-free mechanisms that counteract this bias, both when the agent can propose only a single project and when he can propose multiple ones. Along the way, we show that multiproject proposals play an important strategic role in providing the principal with better fallback options than rejection.  

We study project-choice problems in which the agent privately observes which projects are available. How to counteract the agent's bias to propose his favorite project and hide his less preferred ones is a first-order concern. We characterize prior-free mechanisms that counteract this bias. If the agent can propose only a single project, the optimal mechanism has a two-tier structure: a top-tier project is approved with no questions asked while a bottom-tier project will be scrutinized. The more a bottom-tier project benefits the agent, the less likely it will be approved. If the agent can propose multiple projects, he can provide the principal with better fallback options than rejection. We show that the principal does not always choose his favorite project among the proposed projects.

Our analysis shows that the minimax regret approach not only is tractable but also preserves the fundamental trade-off faced by the principal. The approach might be useful for studying other variants of the project-choice problem. For instance, the principal may not observe the agent's payoffs from proposed projects; the project-choice process may involve a deeper hierarchy involving three or more parties; the principal may be able to acquire information about the agent's type at a cost; or the agent may need to exert effort to discover projects (\cite{ArmstrongVickers2010}). We leave these questions for future research.

\section{Proofs}
\label{se:proofs}

    \subsection{Proof of Theorem~\ref{th:one-project}}

    We have proven statement (i) in the first paragraph after Theorem \ref{th:one-project}.  
   \begin{claim}
   The choice function under $\alpha^s$ in statement (ii) has the $\wcr$ of $R^s$.
   \end{claim}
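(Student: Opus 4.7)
The plan is to verify that the proposing behavior described in statement~(ii) is weakly optimal for the agent under $\alpha^s$, and then to bound the principal's regret directly in two cases, exploiting the specific formula for $R^s$.

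For the agent's optimality, I would just read off his expected payoffs. Proposing a top-tier project $(u,v)$ (with $v\geqslant 1-R^s$) yields $u$; proposing a bottom-tier project $(u,v)$ with $u>0$ yields $\alpha^s(u,v)\,u=\underu$; proposing a project with $u=0$ yields $0$. Since $u\geqslant\underu$ for every project in $D$, a top-tier proposal is always at least as good as a bottom-tier one, and among top-tier projects the agent strictly prefers to maximize $u$. Among bottom-tier projects with $u>0$ the agent is indifferent, so one may assume he breaks ties in favor of maximizing the principal's expected payoff $\alpha^s(u,v)v=\underu v/u$, i.e.\ maximizing $v/u$.

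For the regret bound, I would split on whether the type $A$ contains a top-tier project. In Case~1, the agent proposes some top-tier $(u^*,v^*)$ with $v^*\geqslant 1-R^s$, which is approved with probability one. Since the principal's favorite satisfies $\max_{a\in A}v(a)\leqslant 1$, the regret is at most $1-(1-R^s)=R^s$. In Case~2, all projects in $A$ are bottom-tier, which requires $\underv<1-R^s$; by the definition $R^s=\min\{(1-\underu)/(2-\underu),\,1-\underv\}$, this forces $R^s=(1-\underu)/(2-\underu)$. The agent proposes $(u^*,v^*)\in\argmax_{(u,v)\in A} v/u$, so for the principal's favorite $(u_p,v_p)\in A$ we have $v^*/u^*\geqslant v_p/u_p\geqslant v_p$ (using $u_p\leqslant 1$), and therefore the principal's expected payoff is at least $\underu v^*/u^*\geqslant\underu v_p$. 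Hence the regret is at most $v_p-\underu v_p=(1-\underu)v_p<(1-\underu)(1-R^s)$.

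The main (and only) calculation is to verify the closing identity $(1-\underu)(1-R^s)=R^s$ in Case~2: substituting $R^s=(1-\underu)/(2-\underu)$ gives $1-R^s=1/(2-\underu)$, so $(1-\underu)(1-R^s)=(1-\underu)/(2-\underu)=R^s$. Combining the two cases yields $\wcr\leqslant R^s$, and together with statement~(i) of Theorem~\ref{th:one-project} this gives $\wcr=R^s$. I expect no real obstacle beyond bookkeeping; the delicate point is simply to recognize that ``all projects are bottom-tier'' forces the first branch in the definition of $R^s$ to be the binding one, which is exactly what makes the algebra close.
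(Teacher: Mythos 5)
Your proof is correct and follows essentially the same route as the paper's: the same split into ``some top-tier project available'' versus ``all projects bottom-tier,'' the same bound $1-v\leqslant R^s$ in the first case, and the same bound of the form $(1-\underu)v_p$ in the second. The only difference is cosmetic: the paper deduces $(1-\underu)v_p\leqslant R^s$ directly from the max--min definition of $R^s$ in \eqref{r-one} (since $v_p<1-R^s$ forces the second branch of the min to bind), whereas you reach the same conclusion by substituting the closed-form expression $R^s=(1-\underu)/(2-\underu)$ and checking the identity $(1-\underu)(1-R^s)=R^s$.
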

  \begin{proof}We call a project $(u,v)$ \emph{top-tier} if $v\geqslant 1-R^s$ and \emph{bottom-tier} if $v<1-R^s$. 

Under $\alpha^s$, the agent's expected payoff is $\underu$ if he proposes a bottom-tier project, and is $u\geqslant \underu $ if he proposes a top-tier project $(u,v)$. Therefore, if the agent has some top-tier project, it is optimal to propose his favorite top-tier project $(u,v)$, in which case the regret is at most $1-v\leqslant R^s$. If all the projects in $A$ are bottom-tier projects, it is optimal to propose a project $(u',v')$ that maximizes $\alpha^s(u,v)v$. Let $(u_p,v_p)\in \argmax_{(u,v)\in A} v$ be a principal's favorite project in $A$. Since this is a bottom-tier project, it follows from the definition of $R^s$ in \eqref{r-one} that $(1-\under u)v_p \leqslant R^s$. Thus, the regret is $v_p- \alpha^s(u',v')v' \leqslant v_p-\alpha^s(u_p,v_p)v_p \leqslant (1-\underu)v_p\leqslant R^s$.
 \end{proof}
\begin{claim}If $\alpha$ implements a choice function that has the $\wcr$ of $R^s$, then $\alpha(u,v)\leqslant \alpha^s(u,v)$ for every $(u,v)\in D$.  \label{claim:lowerprob}
\end{claim}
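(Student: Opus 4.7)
The plan is to argue the contrapositive: if $\alpha(u_0,v_0) > \alpha^s(u_0,v_0)$ for some $(u_0,v_0)\in D$, then the implemented choice function has worst-case regret strictly exceeding $R^s$. Since $\alpha^s(u,v)=1$ whenever $v\geqslant 1-R^s$ or $u=0$, the inequality $\alpha(u,v)\leqslant \alpha^s(u,v)$ is automatic in those cases. So the only nontrivial range is $v_0<1-R^s$ and $u_0>0$, where $\alpha^s(u_0,v_0)=\underu/u_0$ and the supposed violation reads $\alpha(u_0,v_0)\cdot u_0 > \underu$.

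The witness type will be $A=\{(u_0,v_0),(\underu,1)\}$, which is a legitimate finite subset of $D$ since $\underv\leqslant 1$. Under $\alpha$, the agent's payoff from the three admissible proposals in $\cale$ is $\alpha(u_0,v_0)\cdot u_0$ from $\{(u_0,v_0)\}$, $\alpha(\underu,1)\cdot\underu\leqslant \underu$ from $\{(\underu,1)\}$, and $0$ from $\emptyset$. By the standing assumption, the first of these strictly exceeds the other two, so $\{(u_0,v_0)\}$ is the unique maximizer in $\argmax_{P\subseteq A, P\in\cale}U(\alpha,P)$. Any implemented choice function $f$ must therefore satisfy $f((u_0,v_0)\mid A)=\alpha(u_0,v_0)$ and $f((\underu,1)\mid A)=0$.

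Since $(\underu,1)\in A$ gives the principal payoff $1$, the regret at $A$ is at least
\[
1 - \alpha(u_0,v_0)\, v_0 \;\geqslant\; 1-v_0 \;>\; R^s,
\]
using $\alpha(u_0,v_0)\leqslant 1$ and $v_0<1-R^s$. This contradicts $\wcr(f)\leqslant R^s$ and completes the proof.

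The argument is essentially a single diagnostic type, and the only delicate step is ensuring that the agent's preference over proposals at $A$ is strict, so that the implemented $f$ is pinned down. That strictness comes directly from the chain $\alpha(u_0,v_0)\cdot u_0 > \underu \geqslant \alpha(\underu,1)\cdot\underu$, and is the reason the contradiction goes through cleanly; I do not expect any further technical obstacle.
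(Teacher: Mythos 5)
Your proof is correct and is essentially the paper's own argument: both dispose of the region where $\alpha^s=1$ trivially and then use the single diagnostic type $A=\{(u,v),(\underu,1)\}$, with the same inequality chain $\alpha(u,v)u>\underu\geqslant\alpha(\underu,1)\underu$ forcing the agent to propose $(u,v)$ and the regret $1-\alpha(u,v)v\geqslant 1-v>R^s$. The only difference is presentational (contrapositive versus direct), and your explicit attention to strictness of the agent's preference, which pins down the implemented choice function, is a welcome touch the paper leaves implicit.
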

%\eran{I divided a claim to two claims}
\begin{proof}Fix a project $(u,v)$. If $v\geqslant1-R^s$ or $u=0$, then $\alpha^s(u,v)=1$ and therefore $\alpha(u,v)\leqslant \alpha^s(u,v)$. If $v<1-R^s$ and $u>0$, then since the $\wcr$ under $\alpha$ is $R^s$, it must be the case that if $A=\{(u,v), (\under u,1)\}$, then the agent proposes the project $(\underu, 1)$. Otherwise, the regret is at least $1-v>R^s$. Therefore $\alpha(u,v)u\leqslant \alpha(\under u,1)\under u\leqslant \under u$, which implies $\alpha(u,v)\leqslant \underu/u=\alpha^s(u,v)$, as desired.\end{proof}
\begin{claim} 
The choice function under $\alpha^s$ in statement (ii) is admissible.
\end{claim}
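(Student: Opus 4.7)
My plan is to argue by contradiction: suppose there is an implementable choice function $f'$ with $\rgrt(f',A)\leq \rgrt(f,A)$ for every $A$ and strict inequality at some $A^*$. I will deduce that in fact $\rgrt(f',A)=\rgrt(f,A)$ at every $A$, contradicting the strict inequality.

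First, since $\wcr(f)=R^s$, weak dominance and statement (i) force $\wcr(f')=R^s$; then by Claim~\ref{claim:lowerprob} any mechanism $\alpha'$ implementing $f'$ satisfies $\alpha'\leq \alpha^s$ pointwise. Applying weak dominance to singletons $\{(u,v)\}$ with $v>0$ gives $v(\alpha^s(u,v)-\alpha'(u,v))=0$, so $\alpha'(u,v)=\alpha^s(u,v)$ whenever $v>0$. Since $\alpha'(a)v(a)=\alpha^s(a)v(a)=0$ when $v(a)=0$, one obtains the key identity $\alpha'(a)v(a)=\alpha^s(a)v(a)$ for every project $a\in D$.

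The core of the argument is to show that the principal's expected payoff under $f'$ is at most that under $f$ at every type $A$. Let $O$ and $O'$ denote the agent's argmax sets within $A$ under $\alpha^s$ and $\alpha'$, respectively. Under $f$, the principal-favorable tie-breaking implicit in the theorem's construction yields principal payoff $\max_{a\in O}\alpha^s(a)v(a)$; under $f'$, the payoff is $\sum_{a\in O'}\mu'(a)\alpha^s(a)v(a)\leq \max_{a\in O'}\alpha^s(a)v(a)$. When the two agent-optimal values $M=\max_A\alpha^s u$ and $M'=\max_A\alpha' u$ coincide, a direct check shows $O'\subseteq O$ (for $a\in O'$ with $v(a)>0$, the equality $\alpha'u=\alpha^s u$ is immediate; for $v(a)=0$, the inequalities $\alpha^s(a)u(a)\geq \alpha'(a)u(a)=M'=M$ and $\alpha^s(a)u(a)\leq M$ squeeze to equality), and the desired inequality follows. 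The main obstacle will be the case $M'<M$: here I would show, by a short case analysis distinguishing $\underu=0$ from $\underu>0$, that this forces every $a\in A$ to have $v(a)=0$ (for $\underu>0$ this is because the existence of any top-tier project or any bottom-tier project with $v>0$ in $A$ would already guarantee $M'=M$ via the identity $\alpha'u=\alpha^s u$ on $v>0$ projects, leaving only $(u,0)$ projects in $A$; for $\underu=0$ one checks that $M'=M$ always holds). In that remaining case both $\rgrt(f,A)$ and $\rgrt(f',A)$ equal zero and there is nothing to prove.

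Combining these pieces yields $\rgrt(f',A)\geq \rgrt(f,A)$ at every $A$, which together with weak dominance gives equality everywhere and contradicts strict inequality at $A^*$. Hence $f$ is admissible.
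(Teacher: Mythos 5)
Your proof is correct and follows essentially the same route as the paper's: both invoke Claim~\ref{claim:lowerprob} to get $\alpha'\leqslant\alpha^s$ pointwise, use singleton types to force $\alpha'=\alpha^s$ wherever the principal's payoff is positive, and then compare the induced choices type by type. Your version is somewhat more explicit about the agent's indifference sets (the containment $O'\subseteq O$) and the $v=0$ and tie-breaking edge cases, which the paper's proof glosses over, but the underlying argument is the same.
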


\begin{proof} %Suppose that $\alpha$ implements a choice function that has the $\wcr$ of $R^s$. By Claim \ref{claim:lowerprob}, $\alpha(u,v) \leqslant \alpha^s(u,v)$ for every $(u,v) \in D$. Let $Z=\{(u,v)\in D:\alpha(u,v)<\alpha^s(u,v)\}$ be the set of projects such that $\alpha$ assigns a lower approval probability than $\alpha^s$ does. If there exists some $(u,v)\in Z$ such that $v>0$, then the regret is strictly higher under $\alpha$ than under $\alpha^s$ if $A=\{(u,v)\}$. Suppose that all the projects in $Z$ have $v=0$, so $\alpha(u,v)=\alpha^s(u,v)$ for every $(u,v)$ with $v>0$.  For every $A$ which has some project with $v>0$, the choice function in statement (ii) gives the principal a weakly higher payoff than any other choice function under $\alpha$. This is because, according to the choice function in statement (ii), among all projects in $A$ that maximize the agent's expected payoff, the agent proposes one that maximizes the principal's expected payoff. For every $A$ which has only projects with $v=0$, the regret is zero under both $\alpha$ and $\alpha^s$. Hence, the choice function in statement (ii) is admissible. 

Suppose that $\alpha$ implements a choice function that has the $\wcr$ of $R^s$. By Claim \ref{claim:lowerprob}, $\alpha(u,v) \leqslant \alpha^s(u,v)$ for every $(u,v) \in D$. If $\alpha(u,v)<\alpha^s(u,v)$ for some top-tier project $(u,v)$, then if the agent's type is $\{(u,v)\}$, $\alpha$ does strictly worse than $\alpha^s$ for the principal. Now suppose that $\alpha(u,v)=\alpha^s(u,v)$ for every top-tier project $(u,v)$. Then for every type $A$ with at least one top-tier project, $\alpha$ and $\alpha^s$ implement the same project choice, which is the agent's favorite top-tier project in $A$. Hence, we only need to consider those $A$ with only bottom-tier projects. Since $\alpha(u,v) \leqslant \alpha^s(u,v)$ and the agent proposes a project in $A$ to maximize the principal's expected payoff $\alpha^s(u,v)v$ under the choice function in (ii), $\alpha$ cannot do better than the choice function in (ii) for every type $A$ with only bottom-tier projects.

%
%
%Lastly, we show that the choice function in statement (ii) is admissible. Fix a mechanism $\rho$ which has the $\wcr$ of $R^s$. Consider the set of all proposals with exactly one top-tier project. If there exists one such proposal $P$ such that $\rho$ does not choose the top-tier project with probability $1$, then if the agent's type is actually $P$, $\rho$ does strictly worse than $\rho^*$ for the principal. 
%
%Now suppose that, for every proposal $P$ with exactly one top-tier project, $\rho$ chooses the top-tier project with probability one. Then $\rho$ and $\rho^\ast$ implement the same project choice for every type $A$ with at least one top-tier project, which is the agent's favorite top-tier project in $A$. Hence, we only need to consider the set of types with only bottom-tier projects. Since $\rho((\cdot,v_p(P))|P) \leqslant \rho^\ast((\cdot,v_p(P))|P)$ and the agent chooses a project in $A$ to maximize the principal's expected payoff under $\rho^*$, $\rho$ cannot do better than $\rho^\ast$ for every type $A$ with only bottom-tier projects. 

\end{proof}

\subsection{Proof of Theorem~\ref{th:multiproject}}
Let $a^\ast=(\underu, 1)$. Let $\overU(P)$ be the optimal value of the following linear programming with variables $\pi(u,v)$ for every $(u,v)\in P$:
\begin{maxi!}[2]{\pi}{\under u+\sum_{(u,v)\in P}\pi(u,v) (u-\under u)}{\label{overu}}{\overU(P)=}\addConstraint{\pi(u,v)}{\geqslant0,\; \forall (u,v) \in P}\addConstraint{\sum_{(u,v)\in P}\pi(u,v)}{\leqslant 1\label{overuconst1}}\addConstraint{\sum_{(u,v)\in P}\pi(u,v)(1-v)}{\leqslant R^m. \label{overuconst2}}
\end{maxi!}
The following claim explains the role of $\overU(P)$ in our argument: $\overU(P)$ is the maximal payoff that the principal can give the agent for proposing $P$ such that the principal can give the agent this same payoff if the agent proposed $P\cup \{a^\ast\}$, while still keeping regret below $R^m$.

\begin{claim}\label{cl:whyover}If $\rho$ is an IC mechanism which has the $\wcr$ of at most $R^m$, then $U(\rho,P)\leqslant \overU(P)$ for every proposal $P$.\end{claim}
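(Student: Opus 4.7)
The plan is to exhibit a feasible solution $\tilde\pi$ to the linear program defining $\overU(P)$ whose objective value dominates $U(\rho,P)$. The naive choice $\pi(u,v)=\rho((u,v)\mid P)$ does not obviously meet constraint~\eqref{overuconst2}, because that constraint encodes the regret against the principal-favorite fallback $a^\ast=(\underu,1)$; it is therefore geared to the augmented type $A:=P\cup\{a^\ast\}$ rather than to $P$. So instead I would test
\[
\tilde\pi(u,v):=\rho\bigl((u,v)\mid A\bigr),\qquad (u,v)\in P.
\]

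For feasibility, nonnegativity is immediate and $\sum_{(u,v)\in P}\tilde\pi(u,v)\leqslant 1$ because $\rho(\cdot\mid A)$ is a subprobability measure. For \eqref{overuconst2}, note that $v(a^\ast)=1=\max_{a\in A}v(a)$, so the hypothesis $\wcr(\rho)\leqslant R^m$ evaluated at type $A$ yields $1-\sum_{a\in A}v(a)\rho(a\mid A)\leqslant R^m$. Rewriting the left-hand side as $\bigl(1-\sum_a\rho(a\mid A)\bigr)+\sum_a(1-v(a))\rho(a\mid A)$ and observing that the $a^\ast$-term in the last sum vanishes, the remaining sum equals $\sum_{(u,v)\in P}(1-v)\tilde\pi(u,v)$ and is therefore at most $R^m$, since the first bracketed term is nonnegative.

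Next I would lower-bound the LP objective at $\tilde\pi$ by $U(\rho,P)$. Using $\sum_{(u,v)\in P}\rho((u,v)\mid A)+\rho(a^\ast\mid A)\leqslant 1$ and $u(a^\ast)=\underu$,
\[
\underu+\sum_{(u,v)\in P}\tilde\pi(u,v)(u-\underu)=\underu\Bigl(1-\!\!\sum_{(u,v)\in P}\!\rho((u,v)\mid A)\Bigr)+\!\!\sum_{(u,v)\in P}\!u\,\rho((u,v)\mid A)\geqslant U(\rho,A).
\]
Finally, IC of $\rho$ applied to $P\subseteq A$ gives $U(\rho,A)\geqslant U(\rho,P)$, closing the chain $\overU(P)\geqslant U(\rho,A)\geqslant U(\rho,P)$. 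The only real conceptual step, and the thing I expect to be the main obstacle at first sight, is recognizing that the LP's regret constraint is really the $\wcr\leqslant R^m$ condition evaluated at the augmented type $A=P\cup\{a^\ast\}$; once that is seen, the rest is routine bookkeeping plus one invocation of IC.
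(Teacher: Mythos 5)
Your proof is correct and follows essentially the same route as the paper's: augment the proposal to $\tilde P=P\cup\{a^\ast\}$, use the regret bound at that augmented type to verify constraint~\eqref{overuconst2}, bound the LP objective at $\rho(\cdot\mid\tilde P)$ below by $U(\rho,\tilde P)$ via the subprobability condition, and finish with IC. Your verification of~\eqref{overuconst2} is just a more explicit unpacking of the step the paper states in one line.
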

\begin{proof}Let $\tilde P=P\cup\{a^\ast\}$. Let $\pi=\rho(\cdot|\tilde P)$. Since the regret under the mechanism $\rho$ when the set of available projects is $\tilde P$ is at most $R^m$, it follows that $\sum_{(u,v)\in P}\pi(u,v)(1-v)\leqslant R^m$. Therefore the restriction of $\pi$ to the set $P$ is a feasible point in problem~\eqref{overu}. Moreover
\begin{equation}\label{moreover}
U(\rho,\tilde P)=\pi(a^\ast)\underu+\sum_{(u,v)\in P}\pi(u,v)u\leqslant \underu+\sum_{(u,v)\in P}\pi(u,v) (u-\under u),\end{equation}
where the inequality follows from $\pi(a^\ast)+\sum_{(u,v)\in P}\pi(u,v)\leqslant 1$. The right hand side of~\eqref{moreover} is the objective function of~\eqref{overu} at $\pi$. Therefore, $U(\rho,\tilde P)\leqslant \overU(P)$. Finally, since the mechanism $\rho$ is IC, it follows that $U(\rho,P)\leqslant U(\rho,\tilde P)$. Therefore, $U(\rho,P)\leqslant \overU(P)$, as desired.
\end{proof}
When $P$ is a singleton $\{(u,v)\}$, we also denote $\overU(\{(u,v)\})$ by $\overU(u,v)$. The following claim explains the role of the function $\alpha^m(u,v)$ in our argument: This is the highest possible approval probability for $(u,v)$ when it is proposed alone, such that the principal can incentivize the agent to propose both available projects when his type is $A=\{(u,v),(\underu,1)\}$, while keeping the regret from type $A$ below $R^m$.
\begin{claim}\label{cl:singleton}When $P=\{(u,v)\}$ is a singleton, $\overU(u,v)=\alpha^m(u,v)u$.\end{claim}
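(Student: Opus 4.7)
The plan is to solve the linear program \eqref{overu} explicitly in the singleton case and then to match the answer, term by term, against the explicit formula for $\alpha^m(u,v)$ encoded in \eqref{etam}. When $P=\{(u,v)\}$, the LP has a single scalar variable $\pi=\pi(u,v)\in[0,1]$: the capacity constraint \eqref{overuconst1} reduces to $\pi\leqslant 1$, and the regret constraint \eqref{overuconst2} reduces to $\pi(1-v)\leqslant R^m$. Since $u\geqslant\underu$, the objective $\underu+\pi(u-\underu)$ is nondecreasing in $\pi$, so the optimum is attained at $\pi^\ast=\min\{1,R^m/(1-v)\}$ (with the convention that this equals $1$ when $v=1$), and hence $\overU(u,v)=\underu+\min\{1,R^m/(1-v)\}\cdot(u-\underu)$.

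Next I would unpack \eqref{etam}. A direct calculation from \eqref{alpha-satisfies} shows that $\gamma(u,p)=0$ whenever $u=\underu$ or $pu\leqslant\underu$, and $\gamma(u,p)=(pu-\underu)/(u-\underu)$ otherwise; in particular $\gamma(u,\cdot)$ is continuous and nondecreasing. Substituting this into \eqref{etam}, the largest $p\in[0,1]$ with $\gamma(u,p)(1-v)\leqslant R^m$ equals $1$ whenever $u=\underu$ or $v\geqslant 1-R^m$, and equals $\underu/u+R^m(u-\underu)/(u(1-v))$ otherwise. Multiplying by $u$ yields $\alpha^m(u,v)\,u=u$ in the first case and $\alpha^m(u,v)\,u=\underu+R^m(u-\underu)/(1-v)$ in the second.

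The proof then concludes with a short case split. In the top-tier regime $v\geqslant 1-R^m$, we have $R^m/(1-v)\geqslant 1$, so $\pi^\ast=1$ and $\overU(u,v)=u=\alpha^m(u,v)\,u$. When $u=\underu$, the LP objective equals $\underu$ for every feasible $\pi$, so $\overU(\underu,v)=\underu=\alpha^m(\underu,v)\cdot\underu$. In the remaining subcase $v<1-R^m$ and $u>\underu$, the regret constraint binds at $\pi^\ast=R^m/(1-v)<1$, giving $\overU(u,v)=\underu+R^m(u-\underu)/(1-v)$, which again matches $\alpha^m(u,v)\,u$.

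The only real obstacle is bookkeeping: $\overU$ and $\alpha^m$ are defined through superficially different optimization problems (the former maximizes the agent's payoff subject to a regret cap, the latter maximizes the singleton approval probability subject to an incentive constraint for type $\{(u,v),(\underu,1)\}$). The substantive content is simply that these two optimizations pin down the same quantity in all regimes, and the case split above makes this transparent.
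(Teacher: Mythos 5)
Your proof is correct, but it takes a genuinely different route from the paper's. The paper never computes either quantity in closed form: it proves the two inequalities $\overU(u,v)\leqslant \alpha^m(u,v)u$ and $\overU(u,v)\geqslant \alpha^m(u,v)u$ directly from the definitions, by mapping each feasible $\pi$ of problem~\eqref{overu} to a probability $p$ with $pu=\underu+\pi(u-\underu)$ (so that $\gamma(u,p)\leqslant\pi$ and hence $p$ is feasible in the maximization defining $\alpha^m$), and conversely by observing that $\pi=\gamma(u,p)$ is feasible in~\eqref{overu} whenever $\gamma(u,p)(1-v)\leqslant R^m$. You instead solve both optimizations explicitly --- $\pi^\ast=\min\{1,R^m/(1-v)\}$ on one side, the closed form of $\gamma$ and then of $\alpha^m$ on the other --- and match the answers case by case. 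Your computations check out (including the edge cases $u=\underu$ and $v\geqslant 1-R^m$, and the attainment of the max in~\eqref{etam} via continuity of $\gamma(u,\cdot)$), and your final expression $\alpha^m(u,v)u=\underu+R^m(u-\underu)/(1-v)$ in the bottom tier agrees with the explicit formula the paper states without derivation at the end of subsection~\ref{se:multi}; so your argument has the side benefit of verifying that formula. The paper's approach is shorter and insulated from the algebra, which is why it generalizes cleanly to the inequality in Claim~\ref{cl:whyover}; yours is more computational but makes the identity of the two optimization problems concrete.
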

%\eran{I changed this proof}
\begin{proof} 
We first prove that $\overU(u,v)\leqslant \alpha^m(u,v)u$. 
Let $\pi=\pi(u,v)$ be a feasible point in problem~\eqref{overu} with $P=\{(u,v)\}$, so that $\pi(1-v)\leqslant R^m$. 
Let $p\in [0,1]$ be such that $pu=\underu +\pi (u-\under u)$. 
Then $\gamma(u,p)\leqslant\pi$ by the definition of $\gamma$ in \eqref{alpha-satisfies}. Therefore  
$\gamma(u,p)(1-v)\leqslant \pi(1-v)\leqslant R^m$. 
Therefore $\alpha^m(u,v)\geqslant p$ by the definition of $\alpha^m$ in \eqref{etam}. Therefore, $\underu +\pi (u-\under u)=pu\leqslant \alpha^m(u,v)u$. Since this holds for every feasible $\pi$, it follows that $\overU(u,v)\leqslant \alpha^m(u,v)u$, as desired.

 We now prove that $\overU(u,v)\geqslant \alpha^m(u,v)u$. Let $p$ be such that $\gamma(u,p)(1-v)\leqslant R^m$ and let $\pi=\gamma(u,p)$. Then $\pi$ is feasible in problem~\eqref{overu} and therefore $\overU(u,v)\geqslant \under u+\pi(u-\under u)\geqslant pu$ by the definition of $\gamma$ in \eqref{alpha-satisfies}.  Since this holds for every  $p$ such that $\gamma(u,p)(1-v)\leqslant R^m$, it follows by the definition of $\alpha^m$ in \eqref{etam} that $\overU(u,v)\geqslant \alpha^m(u,v)u$, as desired.
\end{proof}

For a proposal $P$, let $\underU(P)=\max_{(u,v)\in P}\alpha^m(u,v)u$.
The following claim explains the role of $\underU(P)$ in our argument.
\begin{claim}\label{cl:whyunder}If $\rho$ is an IC mechanism that approves the singleton proposal $\{(u,v)\}$ with probability $\alpha^m(u,v)$, then $U(\rho,P)\geqslant\underU(P)$.\end{claim}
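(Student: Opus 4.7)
The plan is to exploit the incentive compatibility of $\rho$ together with the hypothesis that each singleton proposal $\{(u,v)\}$ is approved with probability $\alpha^m(u,v)$. The conclusion $U(\rho,P)\geqslant \underU(P)=\max_{(u,v)\in P}\alpha^m(u,v)u$ should follow almost immediately from the definition of IC, since a singleton $\{(u,v)\}$ with $(u,v)\in P$ is a feasible alternative proposal whenever the agent's type contains $P$.

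Concretely, I would fix an arbitrary project $(u,v)\in P$. Because $\{(u,v)\}\subseteq P$, the IC condition (in the form stated just before Proposition~\ref{pr:revelation}: $U(\rho,A)\geqslant U(\rho,Q)$ for every $Q\subseteq A$) applied with $A=P$ and $Q=\{(u,v)\}$ yields
\[
U(\rho,P)\;\geqslant\; U(\rho,\{(u,v)\}).
\]
The hypothesis on $\rho$ says exactly that the approval probability of the singleton proposal $\{(u,v)\}$ is $\alpha^m(u,v)$, so $U(\rho,\{(u,v)\})=\alpha^m(u,v)\,u$. Taking the maximum of the right hand side over all $(u,v)\in P$ gives $U(\rho,P)\geqslant \max_{(u,v)\in P}\alpha^m(u,v)u=\underU(P)$, which is the desired inequality.

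There is really no obstacle here: the claim is a direct bookkeeping consequence of the IC definition, with no linear programming, no use of $\gamma$, and no admissibility considerations required. The only thing to be careful about is to cite the correct formulation of IC in the multiproject environment (all subsets of the type are feasible proposals), which is already established earlier in the paper. The substantive work has been done in Claim~\ref{cl:singleton} and Claim~\ref{cl:whyover}; this claim is the easy ``lower bound'' companion to the upper bound of Claim~\ref{cl:whyover}, and together they will be used to pin down the payoff $U(\rho^m,P)$ that the PMP mechanism delivers.
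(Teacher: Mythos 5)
Your proof is correct and is exactly the paper's argument: apply the IC inequality $U(\rho,P)\geqslant U(\rho,\{(u,v)\})=\alpha^m(u,v)u$ for each $(u,v)\in P$ and take the maximum. Nothing further is needed.
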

\begin{proof}Since $\rho$ is IC, $U(\rho,P)\geqslant U(\rho,\{(u,v)\})=\alpha^m(u,v)u$ for every $(u,v)\in P$. \end{proof}
Claim~\ref{cl:whyover} bounds from above the agent's expected payoff in an IC mechanism which has the $\wcr$ of at most $R^m$. Claim~\ref{cl:whyunder} bounds from below the agent's expected payoff in an IC mechanism which approves the singleton proposal $\{(u,v)\}$ with probability $\alpha^m(u,v)$. The following claim shows that the definition of $R^m$ is such that both bounds can be satisfied.
\begin{claim}\label{underover}$\underU(P)\leqslant \overU(P)$ for every $P$.
\end{claim}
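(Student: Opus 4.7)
The plan is to show that for every fixed $(u^{\ast},v^{\ast})\in P$ we have $\overU(P)\geqslant \alpha^m(u^{\ast},v^{\ast})u^{\ast}$, and then take the maximum over $(u^{\ast},v^{\ast})\in P$ to conclude $\overU(P)\geqslant \underU(P)$. The key observation is that the linear program defining $\overU(P)$ in~\eqref{overu} can be restricted to a single coordinate: any feasible point of the singleton LP embeds into a feasible point of the full LP simply by padding with zeros on the other projects.

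More concretely, fix $(u^{\ast},v^{\ast})\in P$ and let $\pi^{\ast}$ be an optimal solution of problem~\eqref{overu} when the proposal is the singleton $\{(u^{\ast},v^{\ast})\}$. Define $\pi$ on $P$ by $\pi(u^{\ast},v^{\ast})=\pi^{\ast}$ and $\pi(u,v)=0$ for all other $(u,v)\in P$. Nonnegativity is trivial, and the two constraints~\eqref{overuconst1} and~\eqref{overuconst2} reduce to exactly the constraints satisfied by $\pi^{\ast}$ in the singleton LP. Hence $\pi$ is feasible for~\eqref{overu} with proposal $P$. The objective value at $\pi$ equals $\underu+\pi^{\ast}(u^{\ast}-\underu)$, which is precisely $\overU(u^{\ast},v^{\ast})$. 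By Claim~\ref{cl:singleton}, this equals $\alpha^m(u^{\ast},v^{\ast})u^{\ast}$, and by feasibility of $\pi$ we conclude $\overU(P)\geqslant \alpha^m(u^{\ast},v^{\ast})u^{\ast}$.

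Since $(u^{\ast},v^{\ast})\in P$ was arbitrary, taking the maximum over all such projects yields $\overU(P)\geqslant \max_{(u,v)\in P}\alpha^m(u,v)u=\underU(P)$, which is the desired inequality. I do not expect any real obstacle here: the argument is a one-line monotonicity statement about the LP's feasible region as the support $P$ grows, combined with the singleton computation already established in Claim~\ref{cl:singleton}.
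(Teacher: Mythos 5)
Your proof is correct and is essentially the paper's own argument: the paper simply asserts that $\overU(P)$ is increasing in $P$ and combines this with Claim~\ref{cl:singleton}, while you make the monotonicity explicit via the zero-padding of an optimal singleton solution. No substantive difference.
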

\begin{proof}The function $\overU(P)$ defined in~\eqref{overu} is increasing in $P$. Therefore, from Claim~\ref{cl:singleton} we have:  
\[\alpha^m(u,v)u=\overU(u,v)\leqslant \overU(P), \; \forall (u,v) \in P.\]
It follows that: 
\[\underU(P)=\max_{(u,v)\in P}\alpha^m(u,v)u \leqslant\overU(P).\]
\end{proof}
By definition, $\rho^m(\cdot|P)$ solves the following linear programming:
\begin{maxi!}[2]{\pi}{\sum_{(u,v)\in P}\pi(u,v) v} {\label{mm1}} {}\addConstraint{\pi(u,v)}{\geqslant0,\; \forall (u,v) \in P}\addConstraint{\sum_{(u,v)\in P}\pi(u,v)}{\leqslant 1\label{mm1const1}}\addConstraint{\sum_{(u,v)\in P}\pi(u,v)u}{=\underU(P).\label{mm1const2}}\end{maxi!}

It is possible that \eqref{mm1} has multiple optimal solutions. Since all the optimal solutions are payoff-equivalent for both the principal and the agent, we do not distinguish among them. From now on, the notation $\rho(\cdot|P) \neq \rho^m(\cdot|P)$ means that $\rho(\cdot|P)$ is not among the optimal solutions to \eqref{mm1}. We now show that, when there is only one available project, the regret under $\rho^m$ is at most $R^m$.   
\begin{claim}\label{cl:optimal-singleton}For every singleton $A=\{(u,v)\}$, the regret under $\rho^m$ is at most $R^m$.\end{claim}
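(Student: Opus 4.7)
My plan is to first identify how $\rho^m$ handles a singleton proposal $\{(u,v)\}$, and then bound the resulting regret by unpacking the nested definition of $R^m$ in~\eqref{regretm}. Since $\rho^m$ is the PMP mechanism induced by $\alpha^m$, a singleton proposal $\{(u,v)\}$ is approved with probability exactly $\alpha^m(u,v)$. Equivalently, one can read this off from~\eqref{mm1} with $P=\{(u,v)\}$: constraint~\eqref{mm1const2} reads $\pi(u,v)\,u = \alpha^m(u,v)\,u$, which pins down $\pi(u,v)=\alpha^m(u,v)$ when $u>\underu$; the degenerate case $u=\underu$ is trivial because $\gamma(\underu,p)=0$ forces $\alpha^m(\underu,v)=1$ and hence regret zero. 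So for $A=\{(u,v)\}$ the agent proposes his only project and the regret equals $(1-\alpha^m(u,v))\,v$.

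The remaining task is to show $(1-\alpha^m(u,v))\,v \leqslant R^m$. For the fixed pair $(u,v)$, I would let $p^\star \in [0,1]$ attain the inner minimum in the definition~\eqref{regretm} of $R^m$; existence of $p^\star$ follows from continuity of $p \mapsto \gamma(u,p)$ and compactness of $[0,1]$. Then by the outer maximum over $D$,
\[\max\bigl\{(1-p^\star)v,\;\gamma(u,p^\star)(1-v)\bigr\} \leqslant R^m.\]
The second term gives $\gamma(u,p^\star)(1-v) \leqslant R^m$, which by the definition~\eqref{etam} of $\alpha^m(u,v)$ as the largest $p$ satisfying this inequality forces $\alpha^m(u,v) \geqslant p^\star$. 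Combined with the first term, $(1-\alpha^m(u,v))\,v \leqslant (1-p^\star)\,v \leqslant R^m$, which is the bound I need.

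I do not anticipate a serious obstacle: the whole argument is essentially a three-step chain of inequalities, each extracting one of the definitions of $R^m$, $\alpha^m$, and $\gamma$ in succession. The only mild subtlety is the boundary case $u=\underu$, where constraint~\eqref{mm1const2} becomes degenerate; but in that case $\alpha^m=1$ and the regret is zero, so nothing needs to be argued. The construction of $\alpha^m$ in~\eqref{etam} was tailor-made to make this singleton bound hold, which is why the proof collapses to a short definitional chase.
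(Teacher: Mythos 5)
Your proof is correct and follows essentially the same route as the paper: identify the singleton approval probability as $\alpha^m(u,v)$, pick a minimizer $\bar p$ of the inner minimum in \eqref{regretm} so that $\max\{(1-\bar p)v,\gamma(u,\bar p)(1-v)\}\leqslant R^m$, use the second term and \eqref{etam} to get $\alpha^m(u,v)\geqslant \bar p$, and conclude via the first term. The only difference is cosmetic (your detour through \eqref{mm1} and the boundary case, which the paper avoids by reading the approval probability directly off the PMP definition).
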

\begin{proof}Since $\rho^m$ approves $\{(u,v)\}$ with probability $\alpha^m(u,v)$, the regret is $v(1-\alpha^m(u,v))$. By the definition of $R^m$, there exists some $\bar p \in [0,1]$ such that $\max\left\{v(1-\bar p), \gamma(u,\bar p)(1-v)\right\} \leqslant R^m$. By~\eqref{etam}, $\bar p \leqslant \alpha^m(u,v)$. Therefore, it also follows that $v(1-\alpha^m(u,v))\leqslant v(1-\bar p)\leqslant R^m$.\end{proof}

\begin{claim}\label{cl:optimal}The optimal value in problem~\eqref{mm1} is at least $\max_{(u,v)\in P}v-R^m$.
\end{claim}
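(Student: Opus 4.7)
The plan is to exhibit a feasible $\pi$ for the linear program~\eqref{mm1} whose objective value is at least $v^*-R^m$, where $v^*=\max_{(u,v)\in P} v$. Let $(u^*,v^*)\in P$ attain this maximum and let $(u_a,v_a)\in P$ satisfy $\alpha^m(u_a,v_a)u_a=\underU(P)$. I will support $\pi$ on at most the two points $(u^*,v^*)$ and $(u_a,v_a)$, splitting into two cases according to whether $\underU(P)\leqslant u^*$ or $\underU(P)> u^*$.

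In the first case I set $\pi(u^*,v^*)=\underU(P)/u^*$ (interpreting the ratio as $1$ in the degenerate case $u^*=0$, where $\underU(P)=0$ as well). This satisfies the three constraints of~\eqref{mm1}, and the shortfall of the objective from $v^*$ is $v^*\bigl(1-\underU(P)/u^*\bigr)$. Since $\underU(P)\geqslant \alpha^m(u^*,v^*)u^*$ by the definition of $\underU(P)$, this shortfall is at most $v^*\bigl(1-\alpha^m(u^*,v^*)\bigr)$, which is at most $R^m$ by Claim~\ref{cl:optimal-singleton}.

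In the second case, $\alpha^m(u_a,v_a)\leqslant 1$ forces $u_a\geqslant \underU(P)>u^*$, so $(u^*,v^*)\neq(u_a,v_a)$ and the unique convex combination of these two points with weights summing to one and $u$-average equal to $\underU(P)$ is well-defined and lies in $[0,1]^2$. Its objective falls short of $v^*$ by $(v^*-v_a)(\underU(P)-u^*)/(u_a-u^*)$. The main obstacle is to bound this shortfall by $R^m$, which I plan to do in two routine steps: replace $v^*-v_a$ by $1-v_a$ using $v^*\leqslant 1$; then replace $(\underU(P)-u^*)/(u_a-u^*)$ by $(\underU(P)-\underu)/(u_a-\underu)$ using the elementary cross-multiplication identity $(u^*-\underu)(u_a-\underU(P))\geqslant 0$, which holds because $u^*\geqslant \underu$ and $\underU(P)\leqslant u_a$. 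The resulting upper bound on the shortfall is $\gamma(u_a,\alpha^m(u_a,v_a))(1-v_a)$, which is at most $R^m$ by the definition of $\alpha^m$ in~\eqref{etam}, completing the argument.
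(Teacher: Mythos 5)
Your proof is correct and takes essentially the same route as the paper's: both exhibit a feasible point supported on the principal's favorite project and on the project attaining $\underU(P)$, and both rest on Claim~\ref{cl:optimal-singleton} together with the bound $\gamma(u_a,\alpha^m(u_a,v_a))(1-v_a)\leqslant R^m$ coming from the definition of $\alpha^m$ in~\eqref{etam}. The only difference is organizational: the paper relaxes the equality constraint~\eqref{mm1const2} into a $\leqslant$ and a $\geqslant$ problem and implicitly mixes their two solutions, whereas you tune the weights so that the $u$-average equals $\underU(P)$ exactly (splitting on whether $\underU(P)\leqslant u^*$), which handles the same step in a slightly more self-contained way.
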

%\eran{I changed the proof of this claim. This is not because of the reviewers, I want it not to rely on Lemma~\eqref{le:thelemma} so that it will be clear that the lemma is needed only for the third part of the theorem. If you have time to check one thing, check this}
\begin{proof}It is sufficient to prove that the optimal values in each of the two problems derived from~\eqref{mm1} by replacing the equality constraint~\eqref{mm1const2} with inequalities $\leqslant$ and $\geqslant$ are  at least  $\max_{(u,v)\in P}v-R^m$. 

We first consider problem \eqref{mm1} with~\eqref{mm1const2} being replaced by: 
\begin{equation}\label{mm1leq}\sum_{(u,v)\in P}\pi(u,v)u\leqslant\underU(P).\end{equation}
Let $(u_p,v_p)\in P$ be a principal's favorite project. Let $\pi$ be the subprobability over $P$ that is given by $\pi(u,v)=\alpha^m(u_p,v_p)\delta_{(u_p,v_p),(u,v)}$
 where $\delta$ is Kronecker notation so \[\delta_{(u_p,v_p),(u,v)}=\begin{cases}1, &\text{if }(u,v)=(u_p,v_p),\\0, &\text{otherwise.}\end{cases}\]
Then $\sum_{(u,v)\in P}\pi(u,v)u=\alpha^m(u_p,v_p)u_p\leqslant \underU(P)$ by the definition of $\underU(P)$,  so $\pi$ satisfies~\eqref{mm1leq}. Also $v_p(1-\alpha^m(u_p,v_p))\leqslant R^m$ by Claim~\ref{cl:optimal-singleton}. Therefore, the value of the objective function in~\eqref{mm1} at $\pi$ is at least $v_p -R^m$, as desired.

We then consider problem \eqref{mm1} with~\eqref{mm1const2} being replaced by: 
\begin{equation}\label{mm1geq}\sum_{(u,v)\in P}\pi(u,v)u\geqslant\underU(P).\end{equation}
Let $(u_a,v_a)\in P$ be such that $(u_a,v_a)\in\argmax_{(u,v)\in P} \alpha^m(u,v) u$. Let $\pi$ be the probability distribution over $P$ such that 
 \[\pi(u,v)=\gamma(u_a,\alpha^m(u_a,v_a))\delta_{(u_a,v_a),(u,v)}+\bigl(1-\gamma(u_a,\alpha^m(u_a,v_a))\bigr)\delta_{(u_p,v_p),(u,v)}.\]
Then  
\[\sum_{(u,v)\in P}\pi(u,v)u=\gamma(u_a,\alpha^m(u_a,v_a))u_a+(1-\gamma(u_a,\alpha^m(u_a,v_a))u_p \geqslant \alpha^m(u_a,v_a)u_a=\underU(P),\] where the inequality follows from $u_p\geqslant \underu$ and \eqref{alpha-satisfies}, and the last equality follows from the definitions of $\underU(P)$ and $(u_a,v_a)$. Hence, $\pi$ satisfies~\eqref{mm1geq}. Also
\[v_p-\sum_{(u,v)\in P}\pi(u,v)v=\gamma(u_a,\alpha^m(u_a,v_a))(v_p-v_a)\leqslant \gamma(u_a,\alpha^m(u_a,v_a))(1-v_a)\leqslant R^m,\]
where the last inequality follows from the definition of $\alpha^m$. Therefore, the value of the objective function in~\eqref{mm1} at $\pi$ is at least $v_p-R^m$, as desired.
\end{proof} 

The following lemma gives an equivalent characterization of the mechanism $\rho^m$.
\begin{lemma}\label{le:thelemma}
The optimal solutions to \eqref{mm1} and those to the following problem coincide.  
%\begin{argmaxi!}[2]{\pi}{\sum_{(u,v)\in P}\pi(u,v) v}{\label{mm2}}{\rho(\cdot|P)\in} \addConstraint{\pi(u,v)}{\geqslant0,\; \forall (u,v) \in P}\addConstraint{\sum_{(u,v)\in P}\pi(u,v)}{\leqslant 1\label{mm2const1}}\addConstraint{\sum_{(u,v)\in P}\pi(u,v)u}{\geqslant\underU(P)\label{mm2const2}}\addConstraint{\sum_{(u,v)\in P}\pi(u,v)u}{\leqslant\overU(P).\label{mm2const3}}\end{argmaxi!}
\begin{maxi!}[2]{\pi}{\sum_{(u,v)\in P}\pi(u,v) v}{\label{mm2}}{} \addConstraint{\pi(u,v)}{\geqslant0,\; \forall (u,v) \in P}\addConstraint{\sum_{(u,v)\in P}\pi(u,v)}{\leqslant 1\label{mm2const1}}\addConstraint{\sum_{(u,v)\in P}\pi(u,v)u}{\geqslant\underU(P)\label{mm2const2}}\addConstraint{\sum_{(u,v)\in P}\pi(u,v)u}{\leqslant\overU(P).\label{mm2const3}}\end{maxi!}
\end{lemma}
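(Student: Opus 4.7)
My plan is to prove the set equality of the optimal solutions. The forward direction is easy: any $\pi$ satisfying $\sum_{(u,v)\in P} \pi(u,v)\, u = \underU(P)$ satisfies both inequality constraints of~\eqref{mm2}, using $\underU(P) \leqslant \overU(P)$ from Claim~\ref{underover}. Hence every feasible point of~\eqref{mm1} is feasible for~\eqref{mm2}, and the optimal value of~\eqref{mm1} is at most that of~\eqref{mm2}. The substantive direction is to show that every optimal $\pi^\ast$ of~\eqref{mm2} satisfies $\sum \pi^\ast(u,v)\, u = \underU(P)$, which makes $\pi^\ast$ feasible---and hence optimal---for~\eqref{mm1}.

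The key tool I would introduce is the concave auxiliary function
\[
g(c) := \max\Bigl\{\sum_{(u,v)\in P} \pi(u,v)\, v : \pi \geqslant 0,\ \sum \pi(u,v) \leqslant 1,\ \sum \pi(u,v)\, u = c\Bigr\},
\]
so that the optimum of~\eqref{mm2} is $\max_{c \in [\underU(P),\,\overU(P)]} g(c)$. It would suffice to show $g$ is strictly decreasing on $(\underU(P),\overU(P)]$; then the maximum is attained uniquely at $c = \underU(P)$, forcing $\sum \pi^\ast u = \underU(P)$ at every optimum of~\eqref{mm2}.

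The argument would split into two cases. First, if some $(u^\circ, v^\circ) \in P$ has $v^\circ \geqslant 1 - R^m$, let $v^\star := \max_{(u,v)\in P} v \geqslant v^\circ$; then $v^\star$ is top-tier, so $\alpha^m(u, v^\star) = 1$ for every $(u, v^\star) \in P$, giving $\underU(P) \geqslant u^+ := \max\{u : (u, v^\star) \in P\}$. A direct check shows $g(c) = v^\star$ iff $c \in [u^-, u^+]$ with $u^- := \min\{u : (u, v^\star) \in P\}$ (the mass can all sit on principal's favorites) and $g(c) < v^\star$ strictly for $c > u^+$. Concavity, combined with $g$ attaining its maximum on the plateau $[u^-, u^+]$, then rules out any constancy of $g$ beyond $u^+$, so $g$ is strictly decreasing on $(u^+, \infty)$ intersected with the domain of feasibility; since $\underU(P) \geqslant u^+$, this yields strict monotonicity on $(\underU(P), \overU(P)]$.

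In the complementary case every $(u, v) \in P$ is bottom-tier, so $1 - v > R^m$ throughout. The constraint $\sum \pi(u,v)(1 - v) \leqslant R^m$ in~\eqref{overu} then forces $\sum \pi < 1$ and dominates $\sum \pi \leqslant 1$, reducing~\eqref{overu} to a fractional-knapsack LP with value-to-weight ratios $(u - \underu)/(1 - v)$. The bottom-tier formula for $\alpha^m$ yields $\alpha^m(u,v)\, u - \underu = R^m(u - \underu)/(1 - v)$, so this ratio is proportional to $\alpha^m(u,v)\, u - \underu$; the knapsack optimum concentrates on $(u_a, v_a) \in \argmax \alpha^m(u,v)\, u$, giving $\overU(P) = \alpha^m(u_a, v_a)\, u_a = \underU(P)$. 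The interval $[\underU(P), \overU(P)]$ collapses to a point and~\eqref{mm2} coincides with~\eqref{mm1} trivially. The main obstacle I anticipate is the top-tier case, specifically the concavity-based argument pinning down the plateau of $g$ as exactly $[u^-, u^+]$ and ruling out constancy of $g$ beyond $u^+$, which leverages the structure of $\alpha^m$ from~\eqref{alpha-satisfies}--\eqref{etam}.
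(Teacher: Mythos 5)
Your proof is correct, but the top-tier case runs on a genuinely different engine from the paper's. In the case where every project in $P$ is bottom-tier, you and the paper do the same thing: the constraint $\sum_{(u,v)\in P}\pi(u,v)(1-v)\leqslant R^m$ forces $\sum_{(u,v)\in P}\pi(u,v)<1$, so \eqref{overu} reduces to \eqref{overu2}, whose optimum concentrates on a single project, giving $\overU(P)=\underU(P)$; your identity $\alpha^m(u,v)u-\underu=R^m(u-\underu)/(1-v)$ is just an explicit form of the paper's ``supported on one project'' step. In the case where $P$ contains a top-tier project, the paper instead drops \eqref{mm2const3} to obtain the common relaxation \eqref{relax} and argues that \eqref{relaxconst2} binds at every optimum: if it were slack, the optimum would also solve \eqref{relax2}, hence be supported on $\argmax_{(u,v)\in P}v$ and satisfy $\sum_{(u,v)\in P}\pi(u,v)u\leqslant u_p=\alpha^m(u_p,v_p)u_p\leqslant\underU(P)$. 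You reach the same conclusion through the concave value function $g(c)$, its plateau $[u^-,u^+]$, and strict decrease past $u^+\leqslant\underU(P)$. Both arguments pivot on the same key fact --- $\alpha^m=1$ on the top tier, so $\underU(P)$ is at least the agent payoff of a principal-favorite project --- but the paper's version is more elementary, requiring no appeal to the concavity or continuity of a parametric LP value function (both standard, but asserted rather than proved in your sketch). What your route buys is a sharper picture: the principal's attainable payoff as a function of the agent's promised payoff is flat up to $u^+$ and strictly decreasing thereafter, which shows directly that every optimum of \eqref{mm2} satisfies the equality \eqref{mm1const2} and makes transparent why the interval $[\underU(P),\overU(P)]$ is irrelevant above its left endpoint. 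Your opening observation that Claim~\ref{underover} makes every feasible point of \eqref{mm1} feasible for \eqref{mm2} correctly supplies the remaining, easy inclusion of optimal-solution sets.
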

\begin{proofof}{Lemma ~\ref{le:thelemma}}
We discuss two cases separately. 
\begin{case}Assume that there exists some $(u,v)\in P$ such that $v\geqslant1-R^m$. Consider the following linear programming which is a relaxation of both problem~\eqref{mm1} and problem~\eqref{mm2}: 
%\begin{argmaxi!}[2]{\pi}{\sum_{(u,v)\in P}\pi(u,v) v} {\label{relax}} {\rho(\cdot|P)\in}\addConstraint{\pi(u,v)}{\geqslant0,\; \forall (u,v) \in P}\addConstraint{\sum_{(u,v)\in P}\pi(u,v)}{\leqslant 1\label{relaxconst1}}\addConstraint{\sum_{(u,v)\in P}\pi(u,v)u}{\geqslant\underU(P).\label{relaxconst2}}\end{argmaxi!}
\begin{maxi!}[2]{\pi}{\sum_{(u,v)\in P}\pi(u,v) v} {\label{relax}} {}\addConstraint{\pi(u,v)}{\geqslant0,\; \forall (u,v) \in P}\addConstraint{\sum_{(u,v)\in P}\pi(u,v)}{\leqslant 1\label{relaxconst1}}\addConstraint{\sum_{(u,v)\in P}\pi(u,v)u}{\geqslant\underU(P).\label{relaxconst2}}\end{maxi!}
We claim that the constraint~\eqref{relaxconst2} holds with equality at every optimal solution. Indeed, if~\eqref{relaxconst2} is not binding then an optimal solution to \eqref{relax} is also an optimal solution to the following linear programming:
%\begin{argmaxi!}[2]{\pi}{\sum_{(u,v)\in P}\pi(u,v) v} {\label{relax2}} {\rho(\cdot|P)\in}\addConstraint{\pi(u,v)}{\geqslant0,\; \forall (u,v) \in P}\addConstraint{\sum_{(u,v)\in P}\pi(u,v)}{\leqslant 1,\label{relax2const1}}\end{argmaxi!}
\begin{maxi!}[2]{\pi}{\sum_{(u,v)\in P}\pi(u,v) v} {\label{relax2}} {}\addConstraint{\pi(u,v)}{\geqslant0,\; \forall (u,v) \in P}\addConstraint{\sum_{(u,v)\in P}\pi(u,v)}{\leqslant 1,\label{relax2const1}}\end{maxi!}
which is derived from~\eqref{relax} by removing~\eqref{relaxconst2}. Let $v_p = \max_{(u,v)\in P} v$ and $u_p = \max_{(u,v_p)\in P} u$. Since $v_p\geqslant1-R^m$, it follows from  the definition of $\alpha^m$ in \eqref{etam} that $\alpha^m(u_p,v_p)=1$. Every optimal solution $\pi$ to problem~\eqref{relax2} satisfies $\support(\pi)\subseteq \argmax_{(u,v)\in P}v$, which implies that: 
\[
\sum_{(u,v)\in P}\pi(u,v)u\leqslant u_p=\alpha^m(u_p,v_p)u_p\leqslant \underU(P).
\] 
We proved that every optimal solution to \eqref{relax} satisfies~\eqref{relaxconst2} with equality, so it is a feasible point in both \eqref{mm1} and \eqref{mm2}. Since problem~\eqref{relax} is a relaxation of both problem~\eqref{mm1} and~\eqref{mm2}, the optimal values of \eqref{mm1}, \eqref{mm2}, and \eqref{relax} coincide. %Hence, every optimal solution to \eqref{mm1} or \eqref{mm2} is optimal in \eqref{relax}. This, combined with the fact that every optimal solution to \eqref{relax} is optimal in \eqref{mm1} and \eqref{mm2}, implies that the optimal solutions to \eqref{mm1} and \eqref{mm2} coincide. 
\end{case}
\begin{case}
Assume now that $v<1-R^m$ for every $(u,v)\in P$. We claim that $\underU(P)=\overU(P)$ and therefore problems~\eqref{mm1} and~\eqref{mm2} coincide. Given that $v<1-R^m$ for every $(u,v) \in P$, the constraint~\eqref{overuconst1} in problem~\eqref{overu} must be slack since if it is satisfied with an equality then~\eqref{overuconst2} is violated. Therefore, in this case $\overU(P)$ also satisfies
\begin{maxi}[2]{\pi}{\under u+\sum_{(u,v)\in P}\pi(u,v) (u-\under u)}{\label{overu2}}{\overU(P)=}\addConstraint{\pi(u,v)}{\geqslant0,\; \forall (u,v) \in P}\addConstraint{\sum_{(u,v)\in P}\pi(u,v)(1-v)}{\leqslant R^m,}\end{maxi}
which is derived from problem~\eqref{overu} by removing~\eqref{overuconst1}. Problem~\eqref{overu2} admits a solution $\pi^\ast$ with the property that, for some $(u^\ast,v^\ast)\in P$, the only non-zero element of $\pi^\ast$ is $\pi^\ast(u^\ast,v^\ast)$. Therefore, by Claim~\ref{cl:singleton},
\[\overU(P)=\overU({u^\ast,v^\ast})=\alpha^m(u^\ast,v^\ast) u^\ast\leqslant \underU(P).\]
Therefore, by Claim~\ref{underover} we get $\overU(P)=\underU(P)$, as desired.
\end{case}
\end{proofof}

\begin{proofof}{Theorem~\ref{th:multiproject}}
\begin{enumerate}
\vspace{-0.35cm} 
\item Fix $(u,v)\in D$ and let $P=\{(u,v)\}$ and $\tilde P=\{(u,v),(\under u,1)\}$.  Let $p$ be the probability that $\rho$ chooses $(u,v)$ when the proposal is $P$. So, $\rgrt(\rho,P)=(1-p)v$. Since the mechanism is IC, the agent's expected payoff under $\tilde P$ must be at least $pu$. By definition of $\gamma(u,p)$, this implies that when the proposal is $\tilde P$ the mechanism chooses $(u,v)$ with probability at least $\gamma(u,p)$. So, $\rgrt(\rho,\tilde P)\geqslant \gamma(u,p)(1-v)$. Therefore, the principal's $\wcr$ is at least $\max\left\{(1-p)v,  \gamma(u,p)(1-v)\right\}$. Since the principal can choose $p$, %his $\wcr$ is at least the value of $\max\left\{(1-p)v,  \gamma(u,p)(1-v)\right\}$ \textit{even when} he can choose the most favorable $p$, so:
it follows that
$$
\wcr(\rho) \geqslant \min_{p \in [0,1]} \max \left\{(1-p)v,  \gamma(u,p)(1-v)\right\}.
$$
This inequality must hold for any $(u,v) \in D$. Hence, the $\wcr$ under any mechanism is at least $R^m$. 
\vspace{-0.35cm} \item The mechanism $\rho^m$ is IC, and it solves problem~\eqref{mm1}. By Claim~\ref{cl:optimal}, the optimal value in problem~\eqref{mm1} is at least $\max_{(u,v)\in P}v-R^m$. Since the objective function in \eqref{mm1} is the principal's payoff under $\rho^m$, the principal's regret is at most $R^m$.

We next argue that $\rho^m$ is admissible. Let $\rho$ be an IC mechanism which has the $\wcr$ of $R^m$ and let $\alpha(u,v)$ be the probability that $\rho$ approves a singleton proposal $\{(u,v)\}$. By Claims~\ref{cl:whyover} and~\ref{cl:singleton}, $\alpha(u,v)\leqslant\alpha^m(u,v)$. Assume first that $\alpha(u,v)<\alpha^m(u,v)$ for some project $(u,v)$. Then, since $v>0$, if the agent's type  is the  singleton $\{(u,v)\}$ the principal's payoff is strictly higher under $\rho^m$ than under $\rho$. Assume now that $\alpha(u,v)=\alpha^m(u,v)$ for every $(u,v)$. We claim that for every proposal $P$ the principal's payoff is weakly higher under $\rho^m$ than under $\rho$. Indeed, since $\rho$ is IC, it follows from Claim~\ref{cl:whyover} that $U(\rho,P)\leqslant \overU(P)$, and, from Claim~\ref{cl:whyunder}, that $U(\rho,P)\geqslant\underU(P)$. Therefore $\rho(\cdot|P)$ is a feasible point in problem~\eqref{mm2}. Since by Lemma \ref{le:thelemma} $\rho^m(\cdot|P)$ is the optimal solution to \eqref{mm2}, the principal's payoff is weakly higher under $\rho^m$ than under $\rho$.
\vspace{-0.35cm} \item Let $\rho$ be an IC, admissible mechanism which has the $\wcr$ of $R^m$ and which differs from $\rho$. We want to show that $U(\rho,P) \leqslant U(\rho^m,P)$ for every finite $P \subseteq D$. Recall that $U(\rho^m,P)=\underU(P)$ for every $P$. 

We first construct a new mechanism $\tilde\rho$ based on $\rho$ and $\rho^m$:
  \[
  \tilde\rho(\cdot|P)=\begin{cases}\rho^m(\cdot|P),&\text{ if }U(\rho,P)\geqslant \underU(P) \\\rho(\cdot|P),&\text{ if }U(\rho,P)<\underU(P).\end{cases}\]
  By definition, $U(\tilde\rho,P)=\min\left\{U(\rho,P),U(\rho^m,P)\right\}$. 
  The functions $U(\rho,P)$ and $U(\rho^m,P)$ are increasing in $P$ since $\rho$ and $\rho^m$ are IC. Therefore $U(\tilde\rho,P)$ is increasing in $P$, so $\tilde\rho$ is also IC. Moreover, for every $P$ either $\tilde \rho(\cdot|P)=\rho(\cdot|P)$ or $\tilde \rho(\cdot|P)=\rho^m(\cdot|P)$. Therefore the $\wcr$ under $\tilde\rho$ is also $R^m$.

We next argue that for every $P$, $\tilde \rho$ gives the principal a weakly higher payoff than $\rho$ does. 
\begin{enumerate}
\vspace{-0.35cm} \item Consider a set $P$ such that $U(\rho,P)<\underU(P)$. Then $  \tilde\rho(\cdot|P)=  \rho(\cdot|P)$, so $\tilde \rho$ gives the principal the same payoff as $\rho$ does. 
\vspace{-0.35cm} \item Consider a set $P$ such that $U(\rho,P) \geqslant \underU(P)$. From Claim~\ref{cl:whyover} we know that $U(\rho,P)\leqslant \overU(P)$ for every $P$. Therefore, $\rho(\cdot|P)$ is a feasible point in problem \eqref{mm2}. It follows from Lemma \ref{le:thelemma} that $\rho^m$ gives the principal a weakly higher payoff than $\rho$ does. Moreover, if $\rho(\cdot|P) \neq \rho^m (\cdot|P)$, then $\rho^m$ gives the principal a strictly higher payoff than $\rho$ does. 

Since $\tilde \rho(\cdot|P)=\rho^m(\cdot|P)$ for every $P$ such that $U(\rho,P) \geqslant \underU(P)$, $\tilde \rho$ gives the principal a weakly higher payoff than $\rho$ does for every such $P$. 
\end{enumerate}
We have argued that $\tilde \rho$ gives the principal a weakly higher payoff than $\rho$ does for every $P$. On the other hand, $\rho$ is admissible, so there cannot be a $P$ such that $\tilde \rho$ gives the principal a strictly higher payoff than $\rho$ does. This implies that $\rho(\cdot|P)=\rho^m (\cdot|P)$ for every $P$ such that $U(\rho,P) \geqslant \underU(P)$, so $U(\rho,P)=U(\rho^m,P)=\underU(P)$. Hence $U(\rho,P) \leqslant \underU(P)$ for every $P$. 
\end{enumerate}
\end{proofof}

\subsection{Partial commitment in the multiproject environment}\label{sec:partial}
%\eran{I would delete the assertion about admissibility and part iii from the theorem. In my view the interesting part is the  equivalence between partial commitment and single project under WCR. The rest just blurs this message}
Let $v_p(P)=\max_{(u,v)\in P} v$ be the principal's payoff from his favorite project in $P$. We consider an environment in which the principal cannot commit to choosing a worse project when a better project is also proposed. This means that the mechanism $\rho$ must satisfy $\rho((u,v)|P)=0$ for any $(u,v)\in P$ such that $v \neq v_p(P)$. We now show that this environment is essentially equivalent to the single-project environment.

Let $u_p(P)=\min_{(u,v) \in P, v=v_p(P)} u$ be the agent's lowest payoff among the principal's favorite projects in $P$. We now define a mechanism $\rho^\ast$ which is adapted from the optimal mechanism $\alpha^s$ in the single-project environment. The mechanism $\rho^\ast$ chooses $(u_p(P),v_p(P))$ with probability $\alpha^s(u_p(P),v_p(P))$ and chooses other projects with zero probability:  
$$
\rho^\ast((u,v)|P) = \begin{cases}
 \alpha^s(u_p(P),v_p(P)) & \text{ if } (u,v) = (u_p(P),v_p(P)) ; \\
   0 & \text{ if } (u,v) \neq (u_p(P),v_p(P)) . 	
 \end{cases}
$$
For a mechanism $\rho$, we let $\rho((\cdot,v_p(P))|P)$ denote the probability that a principal's favorite project in $P$ is chosen when the agent proposes $P$.  
\begin{theorem}\label{thm:partial} Consider the multiproject environment with partial commitment.
\begin{enumerate}[(i)]
\vspace{-0.35cm} \item The $\wcr$ under any mechanism $\rho$ is at least $R^s$. 
\vspace{-0.35cm} \item Under $\rho^\ast$, if the agent has projects with $v\geqslant 1-R^s$, it is optimal for him to propose his favorite project among those with $v\geqslant 1-R^s$. Otherwise, it is optimal to propose a project that maximizes the principal's expected payoff $\alpha^s(u,v)v$. The corresponding choice function has the $\wcr$ of $R^s$ and is admissible. 
\vspace{-0.35cm} \item \label{singlethree} If a mechanism $\rho$ implements a choice function that has the $\wcr$ of $R^s$, then 
$$
\rho((\cdot,v_p(P))|P) \leqslant \rho^\ast((\cdot,v_p(P))|P), \; \text{ for every } P . $$ 
\end{enumerate}	
\end{theorem}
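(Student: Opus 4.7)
\begin{proofof}{Theorem~\ref{thm:partial}}
The plan is to reduce the partial-commitment multiproject environment to the single-project environment of Theorem~\ref{th:one-project}. The key observation is that, since $\rho((u,v)|P) = 0$ whenever $v \neq v_p(P)$, a proposal $P$ behaves like a single-project proposal at $(u_p(P), v_p(P))$: only projects tied for the principal's favorite can be chosen, and since $u_p(P)$ is the minimum agent payoff among these, the binding incentive constraints are pinned down by $(u_p(P), v_p(P))$.

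For part (i), I would mirror the single-project lower bound argument. Fix $v \in [\underv,1)$ and let $q = \rho((1,v)|\{(1,v)\})$. In the type $A = \{(1,v),(\underu,1)\}$, any proposal containing $(\underu,1)$ yields agent payoff at most $\underu$, since only $(\underu,1)$ can be chosen. So if $q > \underu$, the agent strictly prefers $\{(1,v)\}$, giving principal regret at least $1 - v$; if $q \leqslant \underu$, then the singleton type $\{(1,v)\}$ already yields regret at least $v(1-\underu)$. Maximizing $\min\{1-v,\, v(1-\underu)\}$ over $v \in [\underv,1]$ recovers $R^s$.

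For part (ii), under $\rho^\ast$ a proposal $P$ yields agent payoff $u_p(P)\,\alpha^s(u_p(P), v_p(P))$, which equals $u_p(P)$ for top-tier principal's favorites (i.e., $v_p(P) \geqslant 1-R^s$) and $\underu$ for bottom-tier ones. If $A$ contains a top-tier project, let $(u^*,v^*)$ be the agent's favorite such project; the singleton $\{(u^*,v^*)\}$ achieves agent payoff $u^*$, and no $P$ can do better since either $P$'s principal-favorite is top-tier with $u_p(P) \leqslant u^*$ (by maximality of $u^*$) or bottom-tier yielding only $\underu \leqslant u^*$. If $A$ has only bottom-tier projects, the agent is indifferent across proposals and, via the implementation tie-breaking, selects one maximizing $\alpha^s(u,v)v$, exactly as in Theorem~\ref{th:one-project}. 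The WCR bound of $R^s$ and admissibility then follow from the corresponding arguments of Theorem~\ref{th:one-project}.

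For part (iii), fix $P$ and let $(u_p, v_p) = (u_p(P), v_p(P))$. The non-trivial case is $v_p < 1-R^s$ and $u_p > 0$, in which $(\underu,1) \notin P$. Consider type $A = P \cup \{(\underu,1)\}$. Any proposal $P' \subseteq A$ omitting $(\underu,1)$ has $v_p(P') \leqslant v_p < 1-R^s$, so the principal's payoff from such $P'$ is strictly less than $1-R^s = v_p(A) - R^s$, yielding regret above $R^s$; hence the implemented proposal must contain $(\underu,1)$, giving the agent at most $\underu$. This forces the agent's payoff from $P$ to be at most $\underu$ as well, for otherwise $P$ would be the strictly unique optimum and the principal's regret would exceed $R^s$. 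But this payoff equals $\sum_{(u,v_p) \in P} \rho((u,v_p)|P)\, u \geqslant u_p \cdot \rho((\cdot, v_p(P))|P)$, so $\rho((\cdot, v_p(P))|P) \leqslant \underu/u_p = \alpha^s(u_p,v_p)$. The main obstacle I anticipate is the careful tie-breaking in the implementability clause, both in (ii) when the agent is indifferent among several bottom-tier proposals and in (iii) when the agent is indifferent between proposing $P$ and a proposal containing $(\underu,1)$; in each case I would invoke the freedom in the implementability definition to let the principal select the tie-breaking favorable to him.
\end{proofof}
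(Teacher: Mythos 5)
Your proof is correct and follows essentially the same route as the paper: the same lower-bound construction with types $\{(1,v)\}$ and $\{(1,v),(\underu,1)\}$, the same case analysis for the agent's optimal proposal under $\rho^\ast$, and the same use of the type $P\cup\{(\underu,1)\}$ to bound $\rho((\cdot,v_p(P))|P)$. The only point you leave implicit is the admissibility claim in (ii), which the paper establishes by an explicit argument that relies on the bound from part (iii) applied to all proposals $P$ (not just singletons) — exactly the ingredient your part (iii) supplies, so the deferral is harmless.
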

\begin{proof}
\begin{enumerate}[(i)]
\vspace{-0.35cm} \item Fix $v\in [\under v,1)$. Let $x=\rho((1,v)|\{(1,v)\})$ be the probability that $\rho$ chooses $(1,v)$ if the agent proposes only one project $(1,v)$. 

If $x \leqslant \under u$, then, if the agent has only the project $(1,v)$, the regret is $(1-x)v\geqslant (1-\under u)v$. Now suppose that $x> \under u$. If the agent has two projects $(1,v)$ and $(\under u,1)$, his payoff is at most $\under u$ if he proposes both projects. This is because $\rho$ must assign zero probability to $(1,v)$ when $(\under u,1)$ is also proposed. Moreover, his payoff is at most $\under u$ if he proposes only $(\under u,1)$. On the other hand, his payoff is strictly higher than $\under u$ if he proposes only $(1,v)$ since $x>\under u$. Hence, the agent proposes only $(1,v)$ and the regret is $1-\alpha(1,v)v\geqslant 1-v$. Therefore, $\wcr\geqslant \min\left\{(1-\under u)v, 1-v\right\}$ for every $v\in [\under v,1)$.	
% search me 
\vspace{-0.35cm} \item  We call a project $(u,v)$ \emph{top-tier} if $v\geqslant 1-R^s$ and \emph{bottom-tier} if $v<1-R^s$. From the definition of $R^s$ it follows that $(1-\under u)v \leqslant R^s$ for every bottom-tier project. 

Under $\rho^\ast$, the agent's expected payoff is $\underu$ if all of the proposed projects are bottom-tier, and is at least $\underu $ if at least one of the proposed projects is top-tier. Therefore, if the agent has some top-tier project, it is optimal to propose the agent's favorite top-tier project $(u,v)$, in which case the regret is at most $1-v\leqslant R^s$. If all the projects in $A$ are bottom-tier, it is optimal to propose a project $(u',v')$ that maximizes $\alpha^s(u,v)v$. Let $(u_p,v_p)\in \argmax_{(u,v)\in A} v$ be a principal's favorite project. Thus, the regret is $v_p- \alpha^s(u',v')v' \leqslant v_p-\alpha^s(u_p,v_p)v_p \leqslant (1-\underu)v_p\leqslant R^s$.  
      
     We prove that this choice function is admissible in part (iii).  
     
    \vspace{-0.35cm} \item If $P$ includes a top-tier project or $u_p(P)=0$, then $\rho((\cdot,v_p(P))|P) \leqslant 1=\rho^\ast((\cdot,v_p(P))|P)$. If all projects in $P$ are bottom-tier and $u_p(P)>0$, we claim that $\rho((\cdot,v_p(P))|P)\leqslant  \underu /u_p(P)=\rho^\ast((\cdot,v_p(P))|P)$. Suppose otherwise that $\rho((\cdot,v_p(P))|P)> \underu /u_p(P)$ for some $P$. If the agent's type is $P \cup \{(\underu, 1)\}$, then he will not include $(\underu, 1)$ in his proposal, since proposing just $P$ leads to a strictly higher payoff than $\underu$. As a result, the regret is strictly higher than $R^s$, contradicting the fact that the $\wcr$ under $\rho$ is at most $R^s$.  

% Since $\rho$ has the $\wcr$ of $R^s$, 
Lastly, we show that the choice function in statement (ii) is admissible. Fix a mechanism $\rho$ which has the $\wcr$ of $R^s$. If the agent proposes only bottom-tier projects, his expected payoff cannot exceed $\under u$. (Otherwise, the agent has incentive to hide $(\underu,1)$, causing a higher regret than $R^s$.)

Consider the set of all single-project proposals with one top-tier project. If there exists one such proposal $P$ such that $\rho$ does not choose the proposed project with probability $1$, then if the agent's type is actually $P$, $\rho$ does strictly worse than $\rho^*$ for the principal. 

Now suppose that, for every single-project proposal with one top-tier project, $\rho$ chooses the proposed project with probability $1$. Then for every $A$ with at least one top-tier project, $\rho$ and $\rho^\ast$ implement the same project choice, which is the agent's favorite top-tier project in $A$. Hence, we only need to consider those $A$ with only bottom-tier projects. Since $\rho((\cdot,v_p(P))|P) \leqslant \rho^\ast((\cdot,v_p(P))|P)$, the principal's payoff is higher under $\rho^\ast$ than under $\rho$ for every $P \subset A$. Moreover, under the choice function in (ii), the agent chooses a project in $A$ to maximize the principal's expected payoff, so $\rho$ cannot do better than the choice function in (ii) for every $A$ with only bottom-tier projects.  
   \end{enumerate}
	
\end{proof}

     \nocite{*}
\bibliographystyle{aea}
\bibliography{cite}

\end{document}